\documentclass[aip,jmp,amsmath,amssymb,reprint]{revtex4-1}

\usepackage{amsmath,amsthm,amsfonts,amssymb,amscd} 
\usepackage{indentfirst}
\usepackage{graphicx}
\usepackage{color}
\usepackage[american]{babel}
\usepackage{float}
\usepackage[dvipsnames]{xcolor}
\usepackage[colorlinks=true,citecolor=ForestGreen,linkcolor=magenta,urlcolor=Mulberry,hyperindex]{hyperref}
\usepackage{cleveref}
\usepackage{mathtools}
\usepackage{bm}
\usepackage{bbold}
\usepackage{units}
\usepackage{appendix}
\usepackage{comment}
\usepackage{etoolbox}
\usepackage{soul}


\makeatletter
\def\@email#1#2{%
 \endgroup
 \patchcmd{\titleblock@produce}
  {\frontmatter@RRAPformat}
  {\frontmatter@RRAPformat{\produce@RRAP{*#1\href{mailto:#2}{#2}}}\frontmatter@RRAPformat}
  {}{}
}%
\makeatother


\newcommand{\ket}[1]{\vert#1\rangle}
\newcommand{\bra}[1]{\langle#1\vert}

\newcommand{\ketbra}[2]{\vert #1 \rangle \langle #2 \vert}
\newcommand{\avg}[1]{\langle#1\rangle}
\newcommand{\id}{\mathbb{1}}

\newcommand{\Lcal}{\mathcal{L}}
\newcommand{\Hcal}{\mathcal{H}}

\newcommand{\Ecal}{\mathcal{E}}

\newcommand{\Scal}{\mathcal{S}}
\newcommand{\Wcal}{\mathcal{W}}

\newcommand{\tr}{\text{\normalfont Tr}}

\newtheorem{theorem}{Theorem}
\newtheorem*{theorem*}{Theorem}
\newtheorem{definition}{Definition}
\newtheorem{lemma}{Lemma}
\newtheorem{example}{Example}

\newcommand{\iqoqi}{Institute for Quantum Optics and Quantum Information (IQOQI), Austrian Academy of Sciences, Boltzmanngasse 3, A-1090 Vienna, Austria}
\newcommand{\univie}{Vienna Center for Quantum Science and Technology (VCQ), Faculty of Physics, University of Vienna, Boltzmanngasse 5, A-1090 Vienna, Austria}
\newcommand{\todai}{Department of Physics, Graduate School of Science, The University of Tokyo, Hongo 7-3-1, Bunkyo-ku, Tokyo 113-0033, Japan}

\newcommand{\map}[1]{\widetilde{#1}} 
\renewcommand{\H}{\mathcal{H}}
\renewcommand{\L}{\mathcal{L}}
\newcommand{\dbra}[1]{\langle\hspace{-.8mm}\langle #1\vert}
\newcommand{\dket}[1]{\vert#1\rangle\hspace{-.8mm}\rangle}
\newcommand{\dketbra}[2]{\vert #1 \rangle \hspace{-.8mm} \rangle \hspace{-.4mm} \langle\hspace{-.8mm}\langle #2 \vert}


\begin{document}

\title{Unitary channel discrimination beyond group structures: \\ Advantages of sequential and indefinite-causal-order strategies}

\author{Jessica Bavaresco}
\email{jessica.bavaresco@oeaw.ac.at}
\affiliation{\iqoqi}

\author{Mio Murao}
\affiliation{\todai}
\affiliation{Trans-scale Quantum Science Institute, The University of Tokyo, Hongo 7-3-1, Bunkyo-ku, Tokyo 113-0033, Japan}

\author{Marco T\'ulio Quintino}
\affiliation{\iqoqi}
\affiliation{\univie}

\date{27th May 2021}

\begin{abstract}
For minimum-error channel discrimination tasks that involve only unitary channels, we show that sequential strategies may outperform the parallel ones. Additionally, we show that general strategies that involve indefinite causal order are also advantageous for this task. However, for the task of discriminating a uniformly distributed set of unitary channels that forms a group, we show that parallel strategies are, indeed, optimal, even when compared to general strategies. We also show that strategies based on the quantum switch cannot outperform sequential strategies in the discrimination of unitary channels. Finally, we derive an absolute upper bound for the maximal probability of successfully discriminating any set of unitary channels with any number of copies for the most general strategies that are suitable for channel discrimination. Our bound is tight since it is saturated by sets of unitary channels forming a group $k$-design.
\end{abstract}

\maketitle


\section{Introduction}

The discrimination of different hypotheses is a fundamental part of the scientific method that finds application in the most distinct areas, such as information theory,~\cite{information_theory} bioinformatics,~\cite{bioinformatics} machine learning,~\cite{pearl2014probabilistic} and behavioral and social sciences.~\cite{psychological_methods} In a discrimination task, one seeks for the best manner to decide whether a particular hypothesis is the most likely to be the best description of some scenario or experiment. An important, albeit general, instance of a discrimination task consists in identifying between different input-output relations or different causal-effect dynamics a physical system may undergo. For instance, in an x-ray examination that identifies whether a person has a broken bone, the examiner prepares an initial state, which is subjected to certain dynamics when passing through the person's body, before being measured by some physical apparatus. In this case, the goal is to distinguish between the dynamics related to a broken and a healthy bone by implementing the most appropriate input state and measurement apparatus.

In its most fundamental level, closed-system dynamics in quantum theory are described by unitary operations. Hence, being able to discriminate between different unitary operations is a ubiquitous task within quantum theory and quantum technologies. Examples of tasks directly related to our ability to discriminate unitary operations are unitary equivalence determination,~\cite{shimbo18,soeda21} quantum metrology,~\cite{giovannetti06,metrology} quantum hypothesis testing,~\cite{hayashi06} quantum parameter estimation,~\cite{paris09} alignment and transmission of reference frames,~\cite{chiribella04b,bartlett07} and discrimination and tomography of quantum circuit elements.~\cite{chiribella07} 

Discrimination tasks are also relevant to the field of computer science. An oracle, which is an abstract machine used to study decision problems, may be understood as a black box that solves certain problems with a single operation. From a quantum computational perspective, a quantum oracle is a unitary operation whose internal mechanisms are unknown and are employed in seminal quantum algorithms, such as the Deustch-Josza algorithm,~\cite{deutsch92} Grover's algorithm,~\cite{grover98} and Simon's algorithm.~\cite{simon94} These oracle-based quantum algorithms may be recast as unitary discrimination tasks.~\cite{chefles07}

Such practical and fundamental interest has motivated an extensive study of the discrimination of unitary channels within the context of quantum information theory, leading to a plethora of interesting results. 

Contrarily to the problem of quantum state discrimination,~\cite{helstrom69} in which two states cannot be perfectly distinguished with a finite number of uses, or copies, unless they are orthogonal, it has been remarkably shown that any pair of unitary channels can indeed always be perfectly distinguished with a finite number of copies.~\cite{acin01,dariano01} Moreover, perfect discrimination of a pair of unitary channels can always be achieved by a parallel scheme~\cite{acin01,dariano01} (see also Ref.~\cite{duan07}). Even when perfect discrimination is not possible, sequential strategies can never outperform parallel strategies in a task of discrimination between a pair of unitary channels.~\cite{chiribella08-1} Concerning the discrimination of sets of more than two unitary channels, when considering unitaries which are a representation of a group and uniformly distributed, Ref.~\cite{chiribella08-1} showed once more that, for any number of copies, sequential strategies are not advantageous when compared to parallel strategies. For related tasks such as error-free and unambiguous unitary channel discrimination,~\cite{chiribella13-02} unitary estimation,~\cite{chiribella08-1} unitary learning,~\cite{bisio10} and unitary store-and-retrieve,~\cite{sedlak19} parallel strategies were also proven to be optimal. Up to this point, no unitary channel minimum-error discrimination tasks in which sequential strategies outperform parallel strategies are known, to the extent of our knowledge.

In this work, we focus on the discrimination of sets of more than two unitary channels with multiple copies to study the potential advantages that different classes of strategies can bring to this task. The first contribution of our work is precisely to show examples of discrimination tasks of unitary channels in which sequential strategies are advantageous, when compared to parallel strategies that allow the same number of copies. In fact, contrarily to the tasks of error-free and unambiguous unitary channel discrimination,~\cite{chiribella13-02} we show that sequential strategies can achieve perfect discrimination in tasks that parallel strategies cannot. 

Then, motivated by the recent advances in channel discrimination theory that have established the advantage of general discrimination strategies that involve indefinite causal order for general channels,~\cite{bavaresco20} we study the potential advantages of these general strategies for the specific case of unitary channel discrimination. 

Extending the framework developed in Ref.~\cite{bavaresco20} to discrimination tasks that allow for the use of multiple copies, we achieve the following results. We prove the optimality of parallel strategies, even when compared against general strategies, in tasks of discrimination of uniformly distributed unitary channels that form a unitary representation of some group for any number of copies. However, the power of general strategies is revealed when applied to discrimination tasks that fail to satisfy at least one of these requirements. In these cases, we show that general indefinite-causal-order strategies can outperform sequential strategies, and again that sequential strategies can outperform the parallel ones.  Then, we show that a particular case of general strategies, one that applies processes related to the quantum switch~\cite{chiribella13-01} and its generalizations,~\cite{araujo14,yokojima20,barrett20} can never outperform sequential strategies in the discrimination of unitary channels. 

The final contribution of our work is to derive an ultimate upper bound for the maximal probability of successful discrimination of any ensembles of unitary channels with a uniform probability distribution. Our result represents an upper bound for the most general strategy that can possibly be employed in a task of channel discrimination. We show that this bound is saturated by parallel strategies for the discrimination of unitary groups that form a $k$-design, where $k$ is the number of allowed copies.


\section{Minimum-error channel discrimination} 

In a task of minimum-error channel discrimination, one is given access to an unknown quantum channel $\map{C}_i:\Lcal(\Hcal^I)\to\Lcal(\Hcal^O)$, 
which maps quantum states from an input linear space $\Hcal^I$ to an output linear space $\Hcal^O$. This quantum channel is known to have been drawn with probability $p_i$ from a known ensemble of channels $\Ecal=\{p_j,\map{C}_j\}_{j=1}^N$. The task is to determine which channel from the ensemble was received using a limited amount of uses/queries of it, which is essentially to determine the classical label $i$ of channel $\map{C}_i$.
In order to accomplish this task in the case where only a single use of the unknown channel is allowed, one may send part of a potentially entangled state $\rho\in\Lcal(\Hcal^I\otimes\Hcal^\text{aux})$ through the channel $\map{C}_i$ and subsequently jointly measure the output state with a positive-operator valued measure (POVM) $M=\{M_a\}_{a=1}^N,M_a\in\Lcal(\Hcal^O\otimes\Hcal^\text{aux})$. When both the state and measurement are optimized according to the knowledge of the ensemble, the outcome of the measurement will correspond to the most likely value of the label $i$ of the unknown channel. Then, the maximal probability of successfully determining which channel is at hand is given by 
\begin{equation}\label{eq::P_rhoM}
P \coloneqq \max_{\rho,\{M_i\}}\sum_{i=1}^N p_i \tr\left[(\widetilde{C}_i\otimes\widetilde{\id})(\rho)\,M_i\right], 
\end{equation}
where $\widetilde{\id}:\Lcal(\Hcal^\text{aux})\to\Lcal(\Hcal^\text{aux})$ is the identity map.

When more than one use, or copy, as we will refer to from now on, is allowed, different strategies come into play, each exploring a different order in which the copies of the unknown channel are applied. In Figs.~\ref{fig::realization}(a) and (b), we illustrate two such possibilities, a parallel and a sequential strategy, respectively.
However, a more general strategy can be defined by considering the most general higher-order transformation that can map $k$ quantum channels to a valid probability distribution [see Fig.~\ref{fig::realization}(c)]. It has been shown that some of these general strategies may employ processes with an indefinite causal order and that these strategies may outperform parallel and sequential ones in tasks of channel discrimination.~\cite{bavaresco20}


\section{Tester formalism} 

To facilitate the approach to this problem, a concise and unified formalism of testers,~\cite{chiribella09} also referred to as process POVMs,~\cite{ziman08,ziman10} was developed in Ref.~\cite{bavaresco20}, providing practical tools for both the comparison between different strategies and for the efficient computation of the maximal probability of successful discrimination of a channel ensemble under different classes of strategies. We now revise the tester formalism while extending its definitions to strategies that involve a finite number of copies $k$ of the unknown channel.

In order to apply the tester formalism, we will make use of the Choi-Jamio\l{}kowski (CJ) representation of quantum maps. The CJ isomorphism is a one-to-one correspondence between completely positive maps and positive semidefinite operators, which allows one to represent any linear map $\widetilde{L}:\Lcal(\Hcal^I)\mapsto\Lcal(\Hcal^O)$ by a linear operator $L\in\Lcal(\Hcal^I\otimes\Hcal^O)$ defined by 
\begin{equation}
L \coloneqq (\widetilde{\id}\otimes\widetilde{L})(\Phi^+), 
\end{equation}
where $\Phi^+=\sum_{ij}\ket{ii}\bra{jj} \in \Lcal(\Hcal^I\otimes\Hcal^I)$, with $\{\ket{i}\}$ being an orthonormal basis, is an unnormalized maximally entangled state. In this representation, a quantum channel, i.e., a completely positive trace-preserving (CPTP) map $\widetilde{C}:\Lcal(\Hcal^I)\mapsto\Lcal(\Hcal^O)$, is represented by a linear operator $C\in\Lcal(\Hcal^I\otimes\Hcal^O)$, often called the ``Choi operator'' of channel $\widetilde{C}$, which satisfies 
\begin{align}
    C\geq0, \ \ \ \tr_O C = \id^I,
\end{align}
where $\tr_O$ denotes the partial trace over $\Hcal^O$ and $\id^I$ denotes the identity operator on $\Hcal^I$. In particular, the Choi operator of a unitary channel is proportional to a maximally entangled state. Using Choi operators of quantum channels, we can equivalently represent the channel ensemble $\{p_i,\widetilde{C}_i\}_{i=1}^N$ as $\{p_i,C_i\}_{i=1}^N$, where $C_i$ is the Choi operator of channel $\widetilde{C}_i$.

A tester is a set of positive semidefinite operators $T=\{T_i\}_{i=1}^N, T_i\in\Lcal(\Hcal^I\otimes\Hcal^O)$, which obey certain normalization constraints and which, when {taking} the trace with the Choi operator of a quantum channel $C$, lead to a valid probability distribution, according to $p(i|C)=\tr\left(T_i\,C\right)$. In this sense, testers act on quantum channels similarly to how POVMs act on quantum states and can, therefore, be interpreted as a ``measurement'' of a quantum channel. Testers allow us to rewrite the maximal probability of successful discrimination of the channel ensemble $\Ecal=\{p_i,C_i\}_{i=1}^N$ as 
\begin{equation}
P = \max_{\{T_i\}}\sum_{i=1}^N p_i \tr\left(T_i\,C_i\right).
\end{equation}

The advantage of this representation is the simplification of the optimization problem that defines the maximal probability of success: now, optimization over different discrimination strategies may be achieved by maximizing $P$ over the set of valid testers, as opposed to optimizing over both states and measurements [as in Eq.~\eqref{eq::P_rhoM}]. 

\begin{figure*}
\begin{center}
	\includegraphics[width=0.9\textwidth]{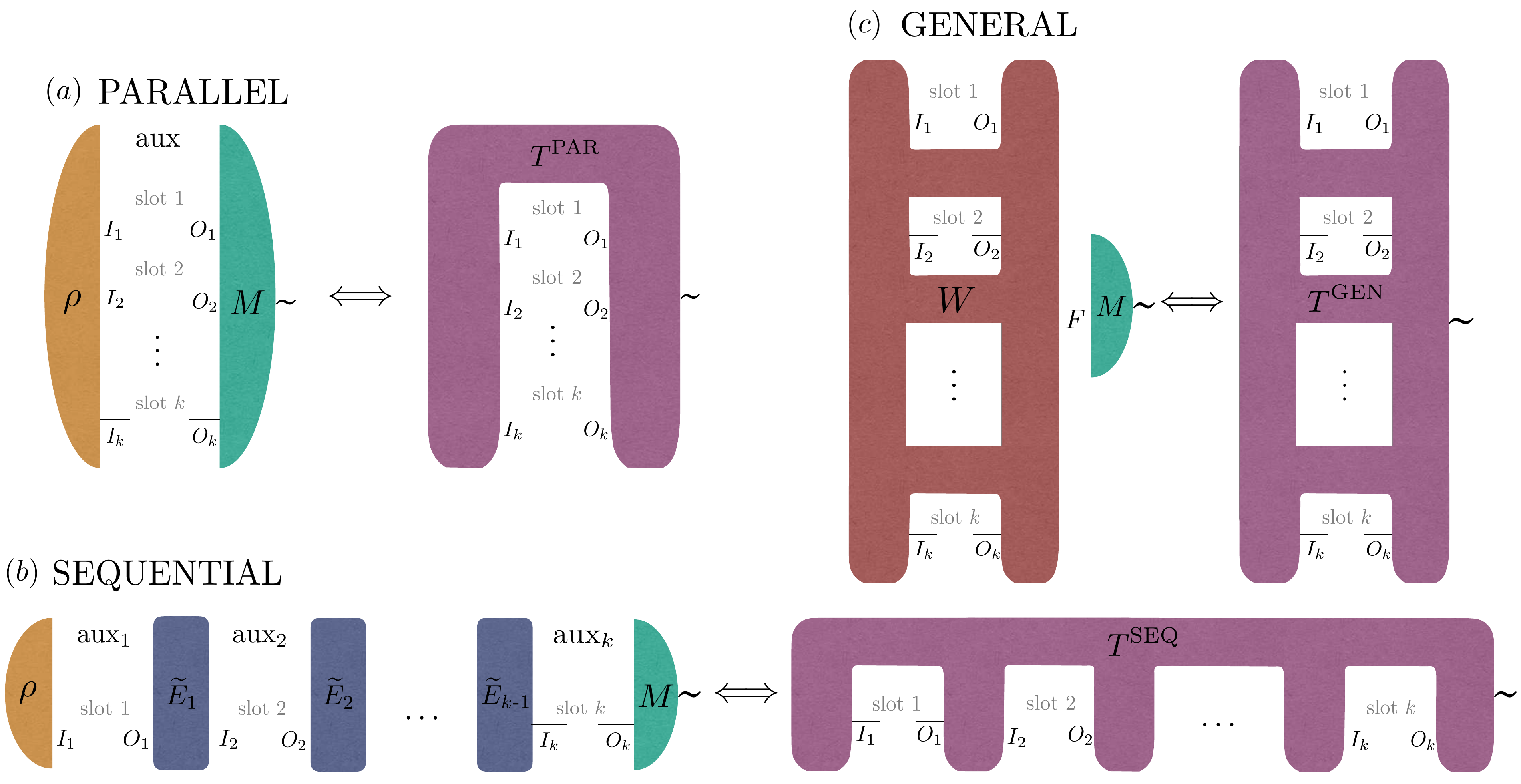}
	\caption{Schematic representation of the realization of every $k$-copy (a) parallel tester $T^\text{PAR}$ with a state $\rho$ and a POVM $M$; (b) sequential tester $T^\text{SEQ}$ with a state $\rho$, channels $\widetilde{E_i}$, $i\in\{1,k-1\}$, and a POVM $M$; and (c) general tester $T^\text{GEN}$ with a process matrix $W$ and a POVM $M$.}
\label{fig::realization}
\end{center}
\end{figure*}

For the case of $k$ copies, different normalization constraints define testers that represent different classes of strategies. We now define $k$-copy testers that represent parallel, sequential, and general strategies.

Let $\Hcal^{I}\coloneqq\bigotimes_{i=1}^k \Hcal^{I_i}$ and $\Hcal^{O}\coloneqq\bigotimes_{i=1}^k \Hcal^{O_i}$ be the joint input and output spaces, respectively, of $k$ copies of a quantum channel, let $d_I\coloneqq\text{dim}(\Hcal^{I})$ and $d_O\coloneqq\text{dim}(\Hcal^{O})$ be their dimension, and let 
\begin{equation}
_{X}(\cdot) \coloneqq \tr_{X}(\cdot)\otimes\frac{\id^{X}}{d_X}
\end{equation}
denote a trace-and-replace operation in $\Hcal^{X}$.

\subsection{Parallel strategies}

Parallel strategies are the ones that consist of sending each system that composes a multipartite state through one of the copies of the unknown channel, in such a way that the output of each copy does not interact with the input of the others, and jointly measuring the output state at the end [see Fig.~\ref{fig::realization}(a), left].
These strategies are characterized by parallel testers [see Fig.~\ref{fig::realization}(a), right] $T^\text{PAR}=\{T^\text{PAR}_i\}_{i=1}^N$, $T^\text{PAR}_i\in\Lcal(\Hcal^{I}\otimes\Hcal^{O})$. Let $W^\text{PAR}\coloneqq \sum_i T^\text{PAR}_i$. Then, parallel testers are defined as 
\begin{align}
    T^\text{PAR}_i &\geq 0 \ \ \ \forall \, i \\
    \tr \, W^\text{PAR} &= d_O \\
    W^\text{PAR} &= _{O}W^\text{PAR}.
\end{align}
This is equivalent to defining $W^\text{PAR}$ to be a parallel process, satisfying $W^\text{PAR}=\sigma^{I}\otimes\id^O$, where $\tr(\sigma^{I})=1$.

\subsection{Sequential strategies}

In a sequential strategy, a quantum system is sent through the first copy of the channel, and its output system is allowed to be sent as input of the next copy, while general CPTP maps may act on the systems in between copies. The final output is measured by a POVM [see Fig.~\ref{fig::realization}(b), left]. These strategies are represented by sequential testers [see Fig.~\ref{fig::realization}(b), right] $T^\text{SEQ}=\{T^\text{SEQ}_i\}_{i=1}^N, T^\text{SEQ}_i\in\Lcal(\Hcal^{I}\otimes\Hcal^{O})$. Let $W^\text{SEQ}\coloneqq \sum_i T^\text{SEQ}_i$. Then, sequential testers are defined as 
\begin{align}
    T^\text{SEQ}_i &\geq 0 \ \ \ \forall \, i \\
    \tr \, W^\text{SEQ} &= d_O \\
    W^\text{SEQ} &= _{O_k}W^\text{SEQ}\\
    _{I_k O_k}W^\text{SEQ} &= _{O_{(k-1)} I_k O_k}W^\text{SEQ}\\
    &\ldots \nonumber \\
    _{I_2 O_2 \ldots I_k O_k}W^\text{SEQ} &= _{O_1 I_2 O_2 \ldots I_k O_k}W^\text{SEQ}.
\end{align}
This is equivalent to defining $W^\text{SEQ}$ to be a $k$-slot comb,~\cite{chiribella09} or a $k$-partite ordered process matrix~\cite{araujo15} (see also Refs.~\cite{gustoki06,kretschmann05}).

\subsection{General strategies}

Finally, general strategies are defined by the most general higher-order operations that can transform $k$ quantum channels into a joint probability distribution. They can be regarded as the most general ``measurement'' that acts jointly on $k$ quantum channels, yielding a classical output. Crucially, and contrarily to parallel and sequential strategies, general strategies do not physically impose any particular order in which the copies of the channel must be applied. The testers that characterize these strategies, the general testers, are the most general sets of positive semidefinite operators $T^\text{GEN}=\{T^\text{GEN}_i\}_{i=1}^N, T^\text{GEN}_i\in\Lcal(\Hcal^{I}\otimes\Hcal^{O})$ that satisfy $p(i|C_1,\ldots,C_k)=\tr\left[T^\text{GEN}_i(C_1\otimes\ldots\otimes C_k)\right]$.
Let $W^\text{GEN}\coloneqq \sum_i T^\text{GEN}_i$. Then, general testers can be equivalently defined as
\begin{align}
    T^\text{GEN}_i &\geq 0 \ \ \ \forall \, i \\
    \tr[W^\text{GEN}(C_1 &\otimes\ldots\otimes C_k)] = 1, \label{eq::wnorm}
\end{align}
for all Choi states of quantum channels $C_i\in\Lcal(\Hcal^{I_i}\otimes\Hcal^{O_i})$. This is equivalent to defining $W^\text{GEN}$ to be a general $k$-partite process matrix.~\cite{oreshkov12,araujo15} For the cases of $k=2$ and $k=3$, we provide a characterization of $W^\text{GEN}$ in terms of linear constraints in Appendix~\ref{app:processmatrices}. 

Ordered processes, such as $W^\text{PAR}$ and $W^\text{SEQ}$, form a subset of the general processes $W^\text{GEN}$. However, some general processes are known not to respect the causal constraints of ordered processes, that is, they are neither parallel nor sequential.~\cite{oreshkov12} Moreover, they cannot be described as convex mixtures of ordered processes, in the bipartite case, or other more appropriate notions of mixtures of causal order, in the multipartite case.~\cite{wechs19} Such general processes have been termed to exhibit an \textit{indefinite causal order}~\cite{oreshkov12} and have been shown to bring advantages to several quantum information tasks, such as quantum computation,~\cite{araujo14}
communication complexity,~\cite{feix15,guerin16} discrimination of bipartite non-signaling channels,~\cite{chiribella12} violation of fully and semi-device-independent causal inequalities,~\cite{oreshkov12,branciard16,bavaresco19} inversion of unknown unitary operations,~\cite{quintino18} and, more recently, discrimination of general quantum channels.~\cite{bavaresco20}

Contrarily to the parallel and sequential cases, a realization of general testers in terms of quantum operations is an open problem. More specifically, a general tester can always be constructed from a general process matrix and a POVM,~\cite{bavaresco20} as illustrated in Fig.~\ref{fig::realization}(c). However, only a subset of process matrices are currently known to be realizable with quantum operations.~\cite{wechs21} This subset, known as ``coherent quantum control of causal orders'', has been shown to bring advantage to the discrimination of general channels.~\cite{bavaresco20} See Ref.~\cite{bavaresco20} for a more detailed discussion about the realization of testers.
\\

For any chosen strategy, the maximal probability of successful discrimination of an ensemble of $N$ channels $\Ecal=\{p_i,C_i\}_{i=1}^N$ using $k$ copies is given by
\begin{equation}\label{eq:SDP}
    P^\Scal \coloneqq \max_{\{T^\Scal_i\}}\sum_{i=1}^N p_i \tr\left(T^\Scal_i\,C_i^{\otimes k}\right),
\end{equation}
where $\Scal\in\{\text{PAR},\text{SEQ},\text{GEN}\}$. 

For all three classes of strategies, $P^\Scal$ can be computed via semidefinite programming (SDP). A short review of dual and primal SDP problems associated with the maximal probability of successful discrimination is given in Appendix~\ref{app:processmatrices}.


\section{Discrimination of unitary channels} 

In this section, we present our results concerning the discrimination of unitary channels. Here, unitary channels, also called unitary operations, will be simply denoted by unitary operators $U$ that satisfy $UU^\dagger=\id$, and these terms will be used interchangeably. The Choi operators of unitary channels will be denoted as $\dket{U}\dbra{U}\in\Lcal(\Hcal^I\otimes\Hcal^O)$, where 
\begin{equation}
    \dket{U}\coloneqq\sum_i(\id\otimes U)\ket{ii}.
\end{equation}

We start considering a discrimination task that involves unitary channels that form a group (see Appendix~\ref{app::proofunitarygroup}). We show that, for an ensemble composed of a set of unitary channels that forms a group and a uniform distribution, parallel strategies do not only perform as well as the sequential ones~\cite{chiribella08-1} but are, indeed, the optimal strategies for discrimination---even considering general strategies that may involve indefinite causal order.

\begin{theorem}\label{thm::unitarygroup}
    For ensembles composed of a uniform probability distribution and a set of unitary channels that forms a group up to a global phase, in discrimination tasks that allow for $k$ copies, parallel strategies are optimal, even when considering general strategies.
    
    More specifically, let $\mathcal{E}=\{p_i,U_i\}_i$ be an ensemble with $N$ unitary channels, where $p_i=\frac{1}{N}\ \forall\,i$ and the set $\{U_i\}_i$ forms a group up to a global phase. Then, for any number of copies $k$ and for every general tester $\{T^\text{GEN}_i\}$, there exists a parallel tester $\{T^\text{PAR}_i\}_i$ such that
    {\small{
    \begin{equation}
	    \frac{1}{N}\sum_{i=1}^N \tr\Big(T^\text{PAR}_i \dketbra{U_i}{U_i}^{\otimes k}\Big) 
	    =
	    \frac{1}{N}\sum_{i=1}^N \tr\Big(T^\text{GEN}_i \dketbra{U_i}{U_i}^{\otimes k}\Big).
    \end{equation}
    }}    
\end{theorem}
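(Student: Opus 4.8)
The plan is to exploit the group symmetry through a twirling (symmetrization) argument that maps any general tester to a covariant one without changing its performance, and then to show that covariance collapses the whole objective onto a single ``seed'' operator whose relevant overlap can already be reproduced by a parallel tester.

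First I would symmetrize. Writing $V_g\coloneqq\id\otimes U_g$ and letting $g\cdot i$ denote the index with $U_{g\cdot i}=U_g U_i$, the relation $\dket{U_g U_i}=V_g\dket{U_i}$ gives the covariance of the Choi operators, $\dketbra{U_{g\cdot i}}{U_{g\cdot i}}^{\otimes k}=V_g^{\otimes k}\dketbra{U_i}{U_i}^{\otimes k}(V_g^{\otimes k})^\dagger$ (the global phase cancels, so a projective representation suffices). I would then define the twirled tester
\begin{equation}
\map{T}_i\coloneqq\frac1N\sum_g (V_g^{\otimes k})^\dagger\,T^\text{GEN}_{g\cdot i}\,V_g^{\otimes k},
\end{equation}
and verify three points: (i) each $\map{T}_i\ge0$; (ii) $W\coloneqq\sum_i\map{T}_i$ is still a valid general process matrix, which holds because conjugating a product of channel Choi operators by $V_g^{\otimes k}$ again yields a product of channel Choi operators, so the normalization $\tr[W(C_1\otimes\cdots\otimes C_k)]=1$ is preserved; and (iii) reindexing the group sum leaves the success probability unchanged. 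By construction $\{\map{T}_i\}$ is covariant, $\map{T}_{g\cdot i}=V_g^{\otimes k}\map{T}_i (V_g^{\otimes k})^\dagger$, and consequently $[W,\id^I\otimes U_g^{\otimes k}]=0$ for all $g$.

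Next I would use covariance to trivialize the objective. Since both $\map{T}_i$ and $\dketbra{U_i}{U_i}^{\otimes k}$ are covariant copies of their values at the identity element, every summand is equal, yielding
\begin{equation}
\frac1N\sum_{i=1}^N\tr\de{\map{T}_i\,\dketbra{U_i}{U_i}^{\otimes k}}=\tr\de{\map{T}_e\,\Omega},\qquad \Omega\coloneqq\dketbra{\id}{\id}^{\otimes k},
\end{equation}
so the performance is governed solely by the seed $\map{T}_e$ and its overlap with $\Omega$, which is precisely the Choi operator of the $k$-copy identity channel. The same twirl applied within the parallel class (conjugation by $V_g^{\otimes k}$ preserves the parallel form $\sigma^I\otimes\id^O$) shows that a parallel value likewise reduces to an overlap $\tr(T^\text{PAR}_e\,\Omega)$, so it remains to exhibit a parallel seed reproducing the number $\tr(\map{T}_e\,\Omega)$.

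The hard part is this last step: showing that the indefinite-causal-order freedom carried by $W$ cannot raise $\tr(\map{T}_e\,\Omega)$ above the parallel reach. Here I would exploit that $\Omega$ is supported on the maximally entangled subspace while covariance forces $W$ to commute with $\id^I\otimes U_g^{\otimes k}$. Decomposing the output representation $g\mapsto U_g^{\otimes k}$ into irreducibles and invoking Schur's lemma constrains $W$ to be block-constant on each irreducible component; since $\Omega$ probes $W$ only through these invariant blocks, this should license replacing $W$ by the parallel process $\sigma^I\otimes\id^O$ with $\sigma^I\coloneqq\tr_O(W)/d_O$ and building a covariant parallel tester with the identical seed overlap, hence the identical success probability. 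The representation-theoretic identity certifying that this replacement leaves $\tr(\map{T}_e\,\Omega)$ invariant is where the genuine work lies; an alternative route would be to write down the primal and dual SDPs of Eq.~\eqref{eq:SDP} for $\Scal=\text{PAR}$ and $\Scal=\text{GEN}$ and show that their optimal values coincide on this covariant family.
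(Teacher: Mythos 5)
Your twirling step is exactly the paper's first move: they too define $T_U \coloneqq \frac{1}{N}\sum_V (\id\otimes V^{\dagger\otimes k})\,T^{\text{GEN}}_{VU}\,(\id\otimes V^{\otimes k})$, check it is a valid general tester with the same success probability, and deduce the commutation $[W,\id\otimes U^{\otimes k}]=0$. But what you correctly flag as "where the genuine work lies" is precisely the part your proposal does not contain, and it is the actual heart of the paper's argument (their Lemma~1): given a covariant general tester $\{T_U\}_U$ with process $W=\sum_U T_U$, one must \emph{construct} a parallel tester reproducing its statistics. The paper does this explicitly, not by representation theory, but by a $\sqrt{W}$-conjugation trick: take the input state $\rho \coloneqq \sqrt{W}^{T}\,\dketbra{\id}{\id}\,\sqrt{W}^{T}$ (normalized because $\tr\big(W\,\dketbra{\id}{\id}\big)=1$, as $\dketbra{\id}{\id}$ is the Choi operator of a channel) and the POVM $M_U \coloneqq \sqrt{W}^{-1}\,T_U\,\sqrt{W}^{-1}$; the commutation relation lets $\sqrt{W}^{T}$ slide past $\id\otimes U^{*\otimes k}$, giving $\tr\big(T^{\text{PAR}}_U\,\dketbra{U}{U}^{\otimes k}\big)=\tr\big(T_U\,\dketbra{U}{U}^{\otimes k}\big)$ for every $U$ in the group.

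Beyond being incomplete, the route you sketch for closing the gap is doubtful as stated. Covariance of $W$ under $\id\otimes U_g^{\otimes k}$ does \emph{not} force $W$ to be (or to be replaceable by) a parallel process: sequential and indefinite-causal-order processes can perfectly well be covariant, and Schur's lemma only block-diagonalizes $W$ with respect to the output representation while leaving its correlations with the input factor untouched --- which is exactly what the maximally entangled operator $\Omega=\dketbra{\id}{\id}^{\otimes k}$ probes. Moreover, substituting $W\mapsto \tr_O(W)\otimes\id^O/d_O$ while "keeping the same seed $\map{T}_e$" is not a legitimate operation: a tester is a process \emph{together with} a positive decomposition of it, and after changing the process one must exhibit new tester elements that are positive, sum to the new process, and preserve all the outcome probabilities. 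That is the nontrivial content the paper supplies via the pair $\big(\rho, \{M_U\}\big)$ above, and which your proposal (including the uninstantiated SDP-duality alternative) leaves unproven. Your reduction of the objective to the single overlap $\tr\big(\map{T}_e\,\Omega\big)$ is correct but does not by itself bring you closer; the paper never needs it, since its construction matches the probabilities term by term.
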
    

The proof of this theorem can be found in Appendix~\ref{app::proofunitarygroup}. 
\\

Theorem~\ref{thm::unitarygroup} has two crucial hypotheses: (1) the set of unitary operators $\{U_i\}$ forms a group and (2) the distribution $\{p_i\}$ is uniform. If at least one of these hypotheses is not satisfied, then Theorem~\ref{thm::unitarygroup}, in fact, does not hold, as we show in the following.

\begin{theorem}\label{thm::seq_vs_par}
    There exist ensembles of unitary channels for which sequential strategies of discrimination outperform parallel strategies. Moreover, sequential strategies can achieve perfect discrimination in some scenarios where the maximal probability of success of parallel strategies is strictly less than one.
\end{theorem}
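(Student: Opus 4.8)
The plan is to prove \cref{thm::seq_vs_par} by an explicit counterexample: I will exhibit one concrete ensemble $\mathcal{E}=\{p_i,U_i\}$ of unitary channels together with a fixed number of copies (I expect the smallest nontrivial case, a single qubit with an auxiliary system and $k=2$, to suffice) and establish directly that $P^{\text{SEQ}}=1$ while $P^{\text{PAR}}<1$. A single such gap proves both assertions of the theorem at once --- the strict advantage and the perfect-versus-imperfect separation --- so the task reduces to (i) choosing a good ensemble, (ii) certifying a sequential lower bound of $1$, and (iii) certifying a parallel upper bound strictly below $1$. Since $P^{\text{PAR}},P^{\text{SEQ}},P^{\text{GEN}}$ are all SDPs via \eqref{eq:SDP}, I would first locate a promising ensemble numerically and then replace the numerics by analytic certificates.

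For the sequential lower bound I would construct an explicit realization rather than argue abstractly. Concretely, I prepare an input state on the auxiliary system together with the first input slot, feed the output of the first copy (possibly through a fixed intermediate unitary or channel $V$) into the second copy, and finish with a POVM, as in Fig.~\ref{fig::realization}(b). Because every copy carries the \emph{same} unknown $U_i$, the circuit effectively applies a composition such as $U_iVU_i$ (or $U_i^2$) to the chosen input. The crucial point --- and the reason the ensemble must fail the hypotheses of \cref{thm::unitarygroup} --- is that for a set of unitaries that does \emph{not} form a group these compositions are genuinely more spread out than the tensor power $U_i^{\otimes 2}$ seen by a parallel strategy, so $V$ and the input can be tuned to send the different $U_i$ to mutually orthogonal output states. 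Verifying orthogonality of the output states and reading off the corresponding sequential tester is a finite, direct computation, so this is the easy direction.

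The hard direction, and the main obstacle, is the parallel upper bound, where I must rule out \emph{every} parallel strategy. Here I would use the standard fact that a parallel tester attains success probability $1$ if and only if there is a \emph{single} input state $\ket{\Psi}$ on $\Hcal^{I}\otimes\Hcal^{\text{aux}}$ whose images $(U_i^{\otimes k}\otimes\id)\ket{\Psi}$ are mutually orthogonal, i.e.\ $\exval{\Psi}{(U_i^\dagger U_j)^{\otimes k}\otimes\id}{\Psi}=0$ for all $i\neq j$ simultaneously. Since the set of parallel testers is compact and the objective continuous, the maximum is attained, so it suffices to show this orthogonality system admits no common solution $\ket{\Psi}$ for my ensemble, which immediately yields $P^{\text{PAR}}<1$. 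The difficulty is precisely this infeasibility claim, as it is quantified over all input states: each pairwise condition is individually satisfiable (every pair of unitaries being distinguishable), yet the conditions cannot be met by one shared state. To make the argument watertight and to pin down the exact value, I would complement it with weak SDP duality for \eqref{eq:SDP}, exhibiting one dual-feasible operator whose objective equals an explicit $c<1$ and hence certifies $P^{\text{PAR}}\le c<1$ for all parallel testers; extracting such a clean dual certificate from the numerical optimum, with exact arithmetic, is the delicate part.

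Finally I would emphasize the structural lesson connecting back to \cref{thm::unitarygroup}: when $\{U_i\}$ forms a group, sequentially composing copies stays inside the group and exposes nothing beyond what a parallel strategy already accesses, whereas for a non-group set (or a non-uniform ensemble) the compositions reachable only sequentially carry strictly more discriminating information. Choosing the ensemble to violate at least one hypothesis of \cref{thm::unitarygroup} is therefore not incidental but exactly what powers the separation.
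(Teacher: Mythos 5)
Your proposal follows essentially the same route as the paper's proof: the paper proves \cref{thm::seq_vs_par} via Example~\ref{ex::k2N4}, the non-group, uniformly distributed ensemble $\{\id,\sqrt{\sigma_x},\sqrt{\sigma_y},\sqrt{\sigma_z}\}$ with $k=2$, where sequential composition yields $U_iU_i=\sigma_i$ (perfectly discriminated by a maximally entangled state and a Bell measurement, so $P^\text{SEQ}=1$), and the parallel bound $P^\text{PAR}\le 9571/10000<1$ is certified by a computer-assisted dual SDP feasible point --- exactly the example-plus-dual-certificate structure you outline. Your added observation that $P^\text{PAR}=1$ is equivalent to the existence of a single input state whose images $(U_i^{\otimes k}\otimes\id)\ket{\Psi}$ are mutually orthogonal is correct, but since you ultimately defer to the same exact-arithmetic dual certificate the paper uses, it does not constitute a genuinely different argument.
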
   

Let us start with the case where the set of unitary channels does not form a group, but the probability distribution of the ensemble is uniform. In the following, $\sigma_x$, $\sigma_y$, and $\sigma_z$ are the Pauli operators and $H\coloneqq\ketbra{+}{0}+\ketbra{-}{1}$, where $\ket{\pm}\coloneqq \frac{1}{\sqrt{2}}(\ket{0}\pm \ket{1})$, is the Hadamard gate. Throughout the paper, we take $\sqrt{A}\coloneqq\sum_i\sqrt{\lambda_i}\ketbra{i}{i}$ to be the square root of an arbitrary diagonalizable operator $A=\sum_i\lambda_i\ketbra{i}{i}$.

\begin{example}\label{ex::k2N4}
    The ensemble composed of a uniform probability distribution and $N=4$ qubit unitary channels given by $\{U_i\} = \{\id,\sqrt{\sigma_x},\sqrt{\sigma_y},\sqrt{\sigma_z}\}$, in a discrimination task that allows for $k=2$ copies, can be discriminated under a sequential strategy with probability of success $P^\text{SEQ}=1$, while any parallel strategy yields $P^\text{PAR}<1$.
\end{example}

A straightforward sequential strategy that attains perfect discrimination of this ensemble can be constructed by first noting that $\sqrt{\sigma_i}\sqrt{\sigma_i}=\sigma_i$; hence, a simple composition of the unitary operators $U_i$ leads to the Pauli operators, which are perfectly discriminated with a bipartite maximally entangled state and a joint measurement in the Bell basis. The proof that any parallel strategy that applies two copies can never attain perfect discrimination is provided in Appendix~\ref{app::proofseq_vs_par} and applies the method of computer-assisted proofs developed in Ref.~\cite{bavaresco20}. 

Another example of this phenomenon is showed by the ensemble $\{U_i\} = \{\id,\sigma_x,\sigma_y,\sqrt{\sigma_z}\}$ with uniform probability distribution, which also satisfies $P^\text{PAR}<P^\text{SEQ}=1$ for a discrimination task with $k=2$ copies. In Appendix~\ref{app::proofseq_vs_par}, we show that such an ensemble can actually be discriminated perfectly by a sequential strategy that uses, on average, $1.5$ copies.

The next example concerns a set of unitary channels that forms a group, but the probability distribution of the ensemble is not uniform.

\begin{example}\label{ex::k2N8}
    Let $\{\id,\,\sigma_x,\,\sigma_y,\,\sigma_z,\,H,\,\sigma_xH,\,\sigma_yH,\,\sigma_zH\}=\{U_i\}$ be a tuple of $N=8$ unitary channels that forms a group up to a global phase, and let $\{p_i\}$ be a probability distribution in which each element $p_i$ is proportional to the $i$-th digit of the number $\pi\approx3.1415926$, that is, $\{p_i\}=\{\frac{3}{31},\frac{1}{31},\frac{4}{31},\ldots,\frac{6}{31}\}$. For the ensemble $\{p_i,U_i\}$, in a discrimination task that allows for $k=2$ copies, sequential strategies outperform parallel strategies, i.e., $P^\text{PAR}<P^\text{SEQ}$.
\end{example}
The proof of this example is also in Appendix~\ref{app::proofseq_vs_par}, and applies the same method of computer-assisted proofs. In Example~\ref{ex::k2N8}, we have set the distribution $\{p_i\}$ to be proportional to the $i^\text{th}$ digit of the constant $\pi$ to emphasize that the phenomenon of sequential strategies outperforming the parallel ones when the set of unitary channels forms a group does not require a particularly well chosen non-uniform distribution. In practice, we have observed that with randomly generated distributions, optimal strategies often respect $P^\text{PAR}<P^\text{SEQ}$.

In both of the aforementioned examples, general strategies do not outperform sequential strategies. However, for the case of discrimination of unitary channels using $k=3$ copies, we show that general strategies are, indeed, advantageous.

\begin{theorem}\label{thm::gen_vs_seq_vs_par}
    There exist ensembles of unitary channels for which general strategies of discrimination outperform sequential strategies.
\end{theorem}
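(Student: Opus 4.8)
The statement is purely existential, so the plan is to exhibit one explicit ensemble of unitary channels together with a certified separation $P^\text{GEN}>P^\text{SEQ}$. By Theorem~\ref{thm::unitarygroup}, no such gap can occur when $\{U_i\}$ forms a group and the distribution is uniform, so the candidate ensemble must violate at least one of these hypotheses. Since for pairs of unitaries parallel strategies are already optimal, and since the $k=2$ instances in Examples~\ref{ex::k2N4}--\ref{ex::k2N8} show general strategies coinciding with sequential ones, I would pass to $k=3$ copies and work with qubit unitaries, keeping the dimension of the associated semidefinite programs manageable (each tester $T^\Scal_i$ acting on $\Lcal(\Hcal^I\otimes\Hcal^O)$ with $d_I=d_O=2^3$).

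The engine of the proof is the fact, recalled in Eq.~\eqref{eq:SDP}, that $P^\text{SEQ}$ and $P^\text{GEN}$ are each computable by semidefinite programming. I would first set up the sequential-tester SDP, with the comb (ordered-process) constraints, and the general-tester SDP, using the explicit linear characterization of tripartite process matrices $W^\text{GEN}$ given in Appendix~\ref{app:processmatrices}. Solving both numerically over a search of candidate qubit ensembles locates an ensemble $\{p_i,U_i\}$ and optimal testers displaying a strict gap. To upgrade this into a rigorous proof, I would use the computer-assisted certification technique of Ref.~\cite{bavaresco20}: a feasible point of the \emph{dual} of the sequential SDP certifies an upper bound $P^\text{SEQ}\le u$, while an explicit valid general tester $\{T^\text{GEN}_i\}$ (a feasible point of the \emph{primal}) certifies a lower bound $P^\text{GEN}\ge\ell$ via $\tfrac{1}{N}\sum_i\tr(T^\text{GEN}_i\dketbra{U_i}{U_i}^{\otimes k})=\ell$. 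One rationalizes and cleans the numerical solutions, then verifies exactly the positive-semidefiniteness and the linear normalization constraints, so that $\ell>u$ constitutes a fully rigorous separation.

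The main obstacle is twofold. First, one must actually find an ensemble exhibiting the gap: the no-go Theorem~\ref{thm::unitarygroup}, together with the absence of any advantage at $k=2$ in the examples, means the effect appears only for sufficiently many copies and for suitably ``non-group'' or ``non-uniform'' ensembles, so the search is nontrivial. Second, and more delicate, is encoding the full set of \emph{general} tripartite process matrices---the indefinite-causal-order constraints are substantially more intricate than the comb constraints defining $W^\text{SEQ}$---and then ensuring the SDP certificate is genuinely rigorous: the certified dual bound for the sequential value must provably fall strictly below the certified primal value achieved by an exactly feasible general tester, which is the point where the numerical data must be converted into exact, verifiable positive-semidefinite objects.
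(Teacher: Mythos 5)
Your proposal takes essentially the same route as the paper: Theorem~\ref{thm::gen_vs_seq_vs_par} is proven there through Example~\ref{ex::k3N4}---a uniform ensemble of $N=4$ qubit unitaries $\{\sqrt{\sigma_x},\sqrt{\sigma_z},\sqrt{H_y},\sqrt{H_P}\}$ at $k=3$ copies---certified by precisely the primal/dual computer-assisted SDP method of Ref.~\cite{bavaresco20} you outline (exact feasible testers for lower bounds, exact dual feasible points for upper bounds), giving $P^\text{SEQ}<\tfrac{9877}{10000}<\tfrac{9881}{10000}<P^\text{GEN}$. The only element your plan leaves to be determined by search is the concrete ensemble, which the paper supplies explicitly.
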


Let us start again with the case where the set of unitary channels does not form a group, but the probability distribution of the ensemble is uniform. For the following, we define $H_y\coloneqq\ketbra{+_y}{0}+\ketbra{-_y}{1}$, where $\ket{\pm_y}\coloneqq \frac{1}{\sqrt{2}}(\ket{0}\pm i\ket{1})$, and $H_P\coloneqq\ketbra{+_P}{0}+\ketbra{-_P}{1}$, where $\ket{+_P}\coloneqq \frac{1}{5}(3\ket{0} + 4\ket{1})$ and $\ket{-_P}\coloneqq \frac{1}{5}(4\ket{0} - 3\ket{1})$.

\begin{figure*}
\begin{center}
	\includegraphics[width=\textwidth]{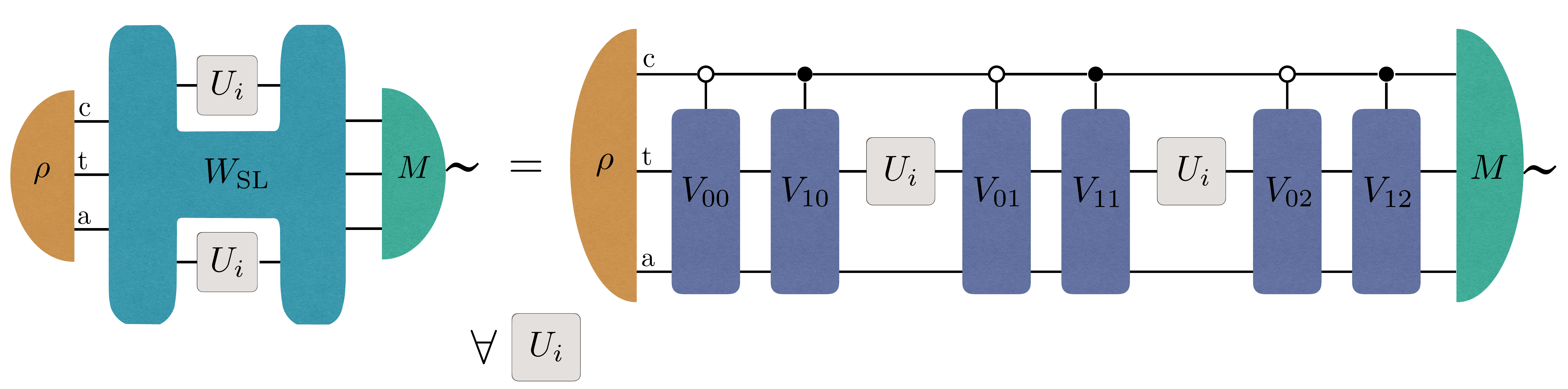}
	\caption{Illustration of a $2$-copy sequential strategy that attains the same probability of successful discrimination as any $2$-copy switch-like strategy, for all sets of unitary channels $\{U_i\}_{i=1}^N$. Line ``c'' represents a control system, ``t'' represents a target system, and ``a'' represents an auxiliary system. Both strategies can be straightforwardly extended to $k$ copies (see Appendix~\ref{app::proofswitchlike}).}
\label{fig:SL}
\end{center}
\end{figure*}

\begin{example}\label{ex::k3N4}
    For the ensemble composed of a uniform probability distribution and $N=4$ unitary channels given by $\{U_i\} = \{\sqrt{\sigma_x},\sqrt{\sigma_z},\sqrt{H_y},\sqrt{H_P}\}$, in a discrimination task that allows for $k=3$ copies, general strategies outperform sequential strategies and sequential strategies outperform parallel strategies. Therefore, the maximal probabilities of success satisfy the strict hierarchy $P^\text{PAR}<P^\text{SEQ}<P^\text{GEN}$.
\end{example}
The proof of this example can be found in Appendix~\ref{app::proofgen_vs_seq_vs_par}. 
\\

General strategies can also be advantageous for the discrimination of an ensemble composed of a non-uniform probability distribution and a set of unitary channels that forms a group. Let the set of unitary operators in Example~\ref{ex::k3N4} be the set of generators of a group (potentially with an infinite number of elements). Now, consider the ensemble composed of such a group and a probability distribution given by $p_i=\frac{1}{4}$ for the four values of $i$ corresponding to the four unitary operators which are the generators of the group, and $p_i=0$ otherwise. It is straightforward to see that the maximal probabilities of successfully discriminating this ensemble would be the same as the ones in Example~\ref{ex::k3N4}, hence satisfying $P^\text{PAR}<P^\text{SEQ}<P^\text{GEN}$. Although somewhat artificial, this example shows that advantages of general strategies are, indeed, possible for this kind of unitary channel ensemble. 
\\

Although general indefinite-causal-order strategies can be advantageous for the discrimination of unitary channels, this is not the case for one particular sub-class of general strategies: those which can be constructed from the quantum switch.~\cite{chiribella13-01} 

Let $V_{mn}$, with $m\in\{0,1\}$, $n\in\{0,1,2\}$, be unitary operators that act on a target and an auxiliary system and $U_1,U_2$ be unitary operators that act only on the target system. Finally, let $\{\ketbra{m}{m}^c\}_m$ be projectors that act on a control system. Then, we define the \textit{switch-like} superchannel, which transforms a pair of unitary channels into one unitary channel, according to
\begin{align}
\begin{split}\label{eq::slsc}
    \Wcal_\text{SL}(U_1,U_2)\coloneqq &\ketbra{0}{0}^c \otimes V_{02}\,({U_2}\otimes\id)\,V_{01}\,({U_1}\otimes\id)\,V_{00} \\
    + &\ketbra{1}{1}^c\otimes V_{12}\,({U_1}\otimes\id)\,V_{11}\,({U_2}\otimes\id)\,V_{10},
\end{split}
\end{align}
where $\id$ is the identity operator acting on the auxiliary system. In the case where $V_{mn}=\id\ \forall\,m,n$, one recovers the standard quantum switch.~\cite{chiribella13-01} 
The switch-like superchannel has been previously considered in Refs.~\cite{yokojima20,barrett20} in the context of reversability-preserving transformations. 
Generalizations of the switch-like superchannel that transform $k$ instead of $2$ unitaries are presented in detail in Appendix~\ref{app::proofswitchlike}, applying unitaries $\{V_{mn}\}_{m,n}$, with $m\in\{0,\ldots,k!-1\}$ and $n\in\{0,\ldots,k\}$, and considering all permutations of the target unitaries $\{U_l\}_{l=1}^k$. Such $k$-slots switch-like superchannels have been shown to be implementable via coherent quantum control of causal orders.~\cite{wechs21}

Now, let $W^\text{SL}\in\L(\H^P\otimes\H^I\otimes\H^O\otimes\H^F)$ be the $k$-slot switch-like process associated with the $k$-slot generalization of the switch-like superchannel in Eq.~\eqref{eq::slsc}. A general discrimination strategy, given by the $k$-copy switch-like tester $T^\text{SL}=\{T^\text{SL}_i\}$, $T^\text{SL}_i\in\Lcal(\Hcal^{I}\otimes\Hcal^{O})$ can be constructed using the $k$-slot switch-like process $W^\text{SL}$, a quantum state $\rho\in\Lcal(\Hcal^P)$ that acts on the ``past'' space of the $k$ slots of $W^\text{SL}$, and a POVM $\{M_i\},M_i\in\Lcal(\Hcal^F)$, that acts on the ``future'' space, according to
\begin{equation}
    T^\text{SL}_i\coloneqq \tr_{PF}\Big[(\rho\otimes\id)W^\text{SL}(\id\otimes M_i)\Big]
\end{equation}
where the identity operators $\id$ act on the correspondent complementary spaces.

We show that such switch-like strategies exhibit no advantage over sequential strategies for the discrimination of $N$ unitary channels using $k$ copies.

\begin{theorem}\label{thm::switchlike}
The action of the switch-like process on $k$ copies of a unitary channel can be equivalently described by a sequential process that acts on $k$ copies of the same unitary channel.
    
Consequently, in a discrimination task involving the ensemble $\mathcal{E}=\{p_i,U_i\}_i$ composed of $N$ unitary channels and some probability distribution, and that allows for $k$ copies, for every switch-like tester $\{T^\text{SL}_i\}_i$, there exists a sequential tester $\{T^\text{SEQ}_i\}_i$ that attains the same probability of success, according to
    \begin{equation}
	    \sum_{i=1}^N p_i \tr\Big(T^\text{SL}_i \dketbra{U_i}{U_i}^{\otimes k}\Big) 
    	=
	    \sum_{i=1}^N p_i \tr\Big(T^\text{SEQ}_i \dketbra{U_i}{U_i}^{\otimes k}\Big).
	\end{equation}
\end{theorem}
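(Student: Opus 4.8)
The plan is to exploit the fact that when all copies coincide the permutation structure underlying the switch-like superchannel becomes trivial, so that every coherently controlled branch applies the channel in the \emph{same} temporal order and the control register degenerates into ordinary sequential memory. Concretely, I would first set $U_1=\cdots=U_k=U$ in the (generalized) switch-like superchannel of Eq.~\eqref{eq::slsc} and Appendix~\ref{app::proofswitchlike}. The branch labeled by a permutation $\pi_m$ applies $U_{\pi_m(1)},\ldots,U_{\pi_m(k)}$ interleaved with the fixed unitaries $V_{mn}$, but since all copies equal $U$ every branch collapses to the same ordered sequence of $k$ applications of $U$. Hence
\[
\Wcal_\text{SL}(U,\ldots,U)=\sum_m \ketbra{m}{m}^c\otimes V_{mk}(U\otimes\id)V_{m(k-1)}\cdots V_{m1}(U\otimes\id)V_{m0}.
\]

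Second, I would recognize the right-hand side as a sequential process. Defining the controlled unitaries $W_j\coloneqq\sum_m\ketbra{m}{m}^c\otimes V_{mj}$ on the joint control--target--auxiliary space, the operators $W_0,\ldots,W_k$ serve as the memory channels of a $k$-slot comb [Fig.~\ref{fig::realization}(b)]: the target system is fed into each of the $k$ slots where $U$ acts, while the control and auxiliary systems are carried along as the memory between slots. Each $W_j$ is unitary, so each induced map is CPTP and the construction is a legitimate sequential realization, with $W_0$ absorbed into the state preparation, $W_1,\ldots,W_{k-1}$ playing the role of the intermediate channels, and $W_k$ absorbed into the final measurement. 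A direct composition then confirms that this comb, applied to $k$ copies of $U$, reproduces the operator displayed above; this establishes the first sentence of the theorem.

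Third, for the tester statement I would build $T^\text{SEQ}_i$ from the same past state $\rho$, the sequential process just constructed, and the same future POVM $\{M_i\}$, conjugating by $W_0$ and $W_k$ where needed. Since the switch-like and the sequential processes induce identical channels on every tuple of identical copies, the two testers produce the same statistics, $\tr\big(T^\text{SL}_i\,\dketbra{U_i}{U_i}^{\otimes k}\big)=\tr\big(T^\text{SEQ}_i\,\dketbra{U_i}{U_i}^{\otimes k}\big)$ for each $i$, and weighting by $p_i$ and summing yields the claimed equality.

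The main obstacle is conceptual rather than computational: one must argue that the coherent control register---precisely the ingredient that grants the switch its indefinite causal order for \emph{distinct} inputs---can be demoted to sequential memory without loss. The resolution is that the superposed branches differ only in the order in which the copies are applied, and this distinction evaporates once the copies coincide; the control then merely selects which fixed interleaving unitaries $V_{mn}$ to insert, a choice a sequential comb can carry coherently through its memory. A secondary point to verify is that the $k$-copy generalization of Appendix~\ref{app::proofswitchlike} indeed assigns the same ordered sequence of channel uses to every branch, which is immediate from its construction as a coherent control over the $k!$ permutations.
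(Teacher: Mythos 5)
Your proposal is correct and follows essentially the same route as the paper's own proof: you evaluate the switch-like superchannel at identical inputs, define the coherently controlled interleaving unitaries $W_j=\sum_m\ketbra{m}{m}^c\otimes V_{mj}$ (precisely the paper's $V^\text{ctrl}_n=\sum_\pi \ketbra{\pi}{\pi}^c\otimes V_{\pi n}$), realize them as the memory of a $k$-slot sequential comb carrying the control and auxiliary systems, and then construct the sequential tester from the same past state $\rho$ and future POVM $\{M_i\}$. The conceptual point you flag---that the control register demotes to ordinary sequential memory once all copies coincide---is exactly the insight the paper's construction rests on.
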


The proof can be found in Appendix~\ref{app::proofswitchlike}, where we provide a simple construction of a sequential strategy that performs as well as any switch-like strategy using the same number of copies for unitary channel discrimination. A graphical representation of such a strategy in the case of $k=2$ copies is represented in Fig.~\ref{fig:SL}.


\section{Uniformly sampled unitary channels} \label{sec:numeric}
\begin{table}[t!]
\begin{center}
{\renewcommand{\arraystretch}{1.5}
\begin{tabular}{| c | c | c |}               
        \multicolumn{3}{c}{Uniformly sampling qubit unitary channels} \\ 
	    \hline 
		$N$ & $k=2$ & $k=3$ \\    
		\hline 
		\ \ $2$ \ \ & $\ \boldsymbol{P^\text{\textbf{PAR}}=P^\text{\textbf{SEQ}}}=P^\text{GEN} \ $ & $\ \boldsymbol{P^\text{\textbf{PAR}}=P^\text{\textbf{SEQ}}}=P^\text{GEN} \ $ \\			
		        $3$         &     $P^\text{PAR}=P^\text{SEQ}=P^\text{GEN}$         &  $P^\text{PAR}<P^\text{SEQ}=P^\text{GEN}$ \\	
		        $4$         &     $P^\text{PAR}<P^\text{SEQ}=P^\text{GEN}$         &  $P^\text{PAR}<P^\text{SEQ}<P^\text{GEN}$ \\	
		        $\vdots$    &     $\vdots$                                         &  $\vdots$                                 \\
		        $10$         &     $P^\text{PAR}<P^\text{SEQ}=P^\text{GEN}$         &  $P^\text{PAR}<P^\text{SEQ}<P^\text{GEN}$ \\
		        $\vdots$    &     $\vdots$                                         &  $\vdots$ \\
		        $25$        &     $P^\text{PAR}\approx P^\text{SEQ}=P^\text{GEN}$  &  $P^\text{PAR}< P^\text{SEQ} \approx P^\text{GEN}$\\
        \hline 
\end{tabular}
}
\end{center}
\caption{Summary of numerical findings: Gaps between different strategies of discrimination using $k$ copies of ensembles of $N$ uniformly distributed qubit unitary channels sampled according to the Haar measure.
The bold equalities on row $N=2$ denote known analytical results (see Ref.~\cite{chiribella08-1}). A strict inequality indicates that examples of ensembles that exhibit such gaps were encountered. An equality indicates that, for all sampled ensembles, no gap was encountered, up to numerical precision. The number of sampled sets ranged from $500$ to $50\,000$.} 
\label{tbl::table}
\end{table}

\begin{figure}
\begin{center}
	\includegraphics[width=\columnwidth]{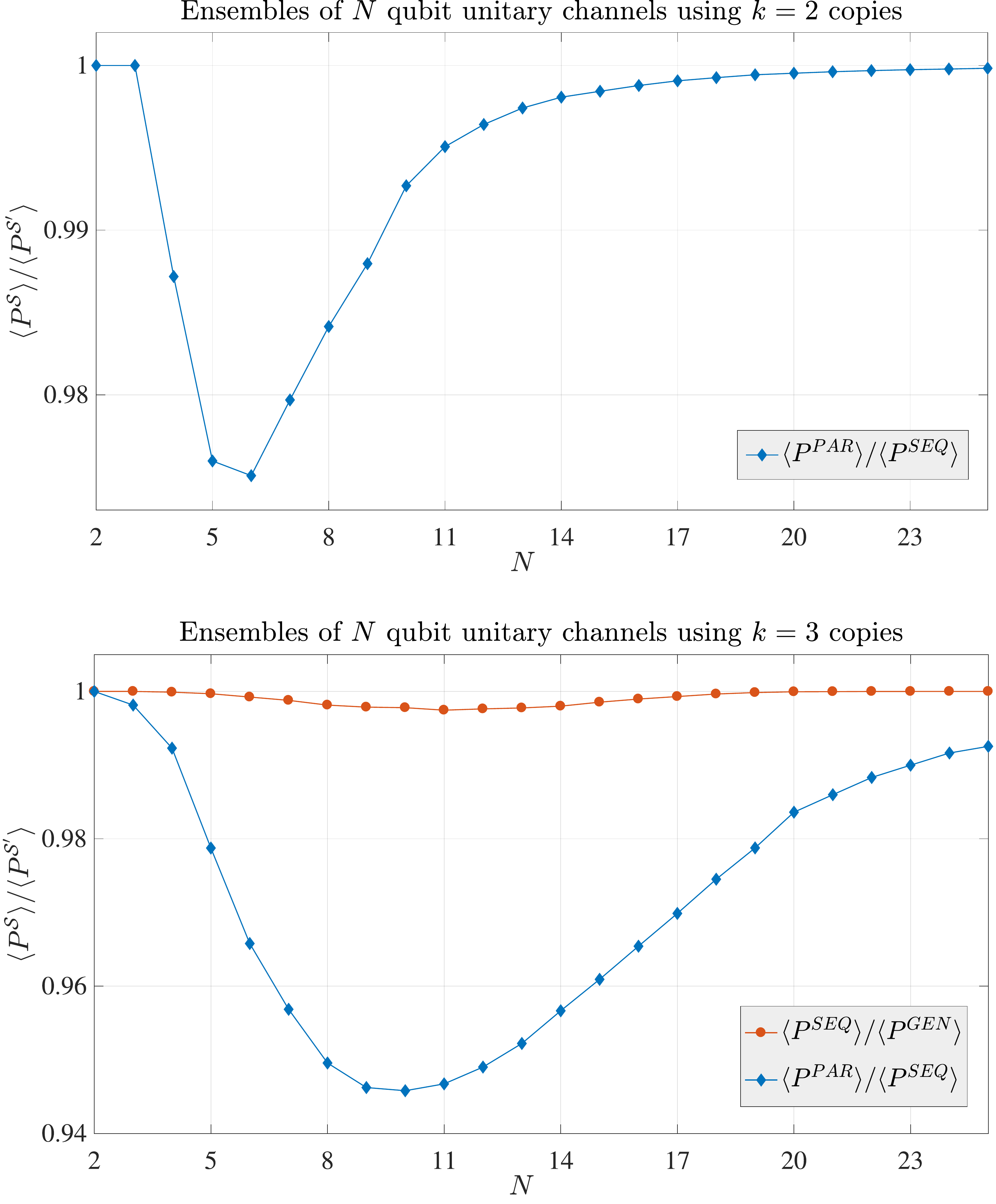}
	\caption{Ratios of the averages of the maximal probability of successful discrimination of ensembles of $N\in\{2,\ldots,25\}$ uniformly distributed qubit unitary channels sampled according to the Haar measure. 
	Top: For $k=2$ copies, the ratio between parallel and sequential strategies $\avg{P^\text{PAR}}/\avg{P^\text{SEQ}}$. To compute the averages, $1\,000$ sets were sampled for each $N$. 
	Bottom: For $k=3$ copies, the ratio between parallel and sequential strategies $\avg{P^\text{PAR}}/\avg{P^\text{SEQ}}$ and between sequential and general strategies $\avg{P^\text{SEQ}}/\avg{P^\text{GEN}}$. To compute the averages, $500$ sets were sampled for each $N$.}
\label{fig::plots_ratios}
\end{center}
\end{figure}

The advantage of sequential and general strategies in the discrimination of unitary channels is not restricted to the main examples given for Theorems~\ref{thm::seq_vs_par} and~\ref{thm::gen_vs_seq_vs_par}. In fact, by sampling sets of unitary operators uniformly distributed according to the Haar measure, and using these sets to construct ensembles with uniform probability distribution, one can find several other examples of the advantage of sequential and general strategies. A summary of our numerical findings is presented in Table~\ref{tbl::table}.

\textit{Qubits, $k=2$}.
In the scenario of qubit unitary channels and $k=2$ copies, we have observed gaps between the performance of parallel and sequential strategies for ensembles of $N\in\{4,\ldots,25\}$ unitary channels. For the case of $N=2$, it is known that such gaps do not exit~\cite{chiribella08-1}, and for the case of $N=3$, no gap was discovered. By calculating the averages of the maximal probabilities of success, we observed that the minimum ratio $\avg{P^\text{PAR}}/\avg{P^\text{SEQ}}$ occurred at $N=6$. At $N=25$, gaps were hardly detected. This behavior can be visualized in the plot of Fig.~\ref{fig::plots_ratios} (top). 

\textit{Qutrits, $k=2$}.
For the case of qutrit unitary channels and $k=2$ copies, we discovered a gap between the performance of parallel and sequential strategies already for a discrimination task of only $N=3$ unitary channels, while in the qubit case, the first example of this phenomenon was found only for $N=4$. In this scenario as well, no advantages of general strategies were found. These results are not plotted.

\textit{Qubits, $k=3$}.
In the scenario of qubit unitary channels and $k=3$ copies, the advantage of general strategies over causally ordered ones (parallel and sequential) is common. Still considering uniformly sampled qubit unitary channels, in the $3$-copy case, we have found a strict hierarchy of discrimination strategies in scenarios of $N\in\{4,\ldots,19\}$. For $N\in\{20,\ldots,25\}$, the advantage of sequential strategies was clear but that of general strategies was hardly detected. Also in the case of $N=3$, only an advantage of sequential over parallel strategies was found. The minimum ratios of the averages $\avg{P^\text{PAR}}/\avg{P^\text{SEQ}}$ and $\avg{P^\text{SEQ}}/\avg{P^\text{GEN}}$ were found at $N=10$ and $N=11$, respectively. These results are plotted in Fig.~\ref{fig::plots_ratios} (bottom).

The number of unitary channel ensembles sampled to calculate the average values of the maximal probabilities of success ranged from $50\,000$ for $d=2$, $k=2$, $N=2$ to $100$ for $d=3$, $k=2$, $N=3$.

In Fig.~\ref{fig::plots_ratios}, one can see how the ratios between the maximal probabilities of success of different classes of strategies decrease as the number $N$ of uniformly sampled unitary channels being discriminated increases. This observation is in line with the idea that, in the limit where the ensemble is composed of all qubit unitary channels, therefore forming the group $SU(2)$, it is expected that parallel strategies would be optimal. In Sec.~\ref{sec::bounds}, we formally analyze the asymptotic behavior of $\avg{P^\Scal}$, while providing an absolute upper bound for the maximal probability of success under any strategy.


\section{Ultimate upper bound for $P^\Scal$ for any set of unitary channels}\label{sec::bounds}

We now present an upper bound for the maximal probability of success for discriminating a set of $N$ $d$-dimensional unitary channels with general strategies when $k$ copies are available. Our result applies to \textit{any} ensemble of unitary channels $\mathcal{E}=\{p_i,U_i\}_{i=1}^N$ where $p_i=\frac{1}{N}$ is a uniform probability distribution. Since general testers are the most general strategies that are consistent with a channel discrimination task, our result constitutes an ultimate upper bound for discriminating uniformly distributed unitary channels. Additionally, as we show later, our bound can be saturated by particular choices of unitary channels.

\begin{theorem}[Upper bound for general strategies] \label{thm::upper_bound}
Let $\mathcal{E}=\{p_i,U_i\}_{i=1}^N$ be an ensemble composed of $N$ $d$-dimensional unitary channels and a uniform probability distribution. The maximal probability of successful discrimination of a general strategy with $k$ copies is upper bounded by
\begin{equation}
    P^\text{GEN}\leq \frac{1}{N}\,\gamma(d,k),
\end{equation}
where $\gamma(d,k)$ is given by
\begin{equation}
   \gamma(d,k) \coloneqq 
{k+d^2-1\choose k}= \frac{(k+d^2-1)!}{k!(d^2-1)!}.
\end{equation}
\end{theorem}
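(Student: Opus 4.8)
The plan is to exploit that every operator $\dketbra{U_i}{U_i}^{\otimes k}$ is supported on the symmetric subspace $\mathrm{Sym}^k(\mathbb{C}^{d^2})\subseteq\Hcal^{I}\otimes\Hcal^{O}$, where each copy factor $\Hcal^{I_j}\otimes\Hcal^{O_j}\cong\mathbb{C}^{d^2}$, and that this subspace has dimension exactly $\gamma(d,k)=\binom{d^2+k-1}{k}$. Let $\Pi_\mathrm{sym}$ denote the projector onto it. Since $\dbra{U_i}\dket{U_i}=d$, each $v_i\coloneqq\dket{U_i}^{\otimes k}/d^{k/2}$ is a unit vector in $\mathrm{Sym}^k$ and $\dketbra{U_i}{U_i}^{\otimes k}=d^k\,v_iv_i^\dagger$. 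First I would rewrite the objective as $\sum_i\tr(T^\text{GEN}_i\dketbra{U_i}{U_i}^{\otimes k})=d^k\sum_i\langle v_i|T^\text{GEN}_i|v_i\rangle$. Using $v_iv_i^\dagger\leq\Pi_\mathrm{sym}$ together with $T^\text{GEN}_i\geq0$ and $\sum_iT^\text{GEN}_i=W^\text{GEN}$, this is bounded by $d^k\sum_i\tr(T^\text{GEN}_i\Pi_\mathrm{sym})=d^k\tr(\Pi_\mathrm{sym}W^\text{GEN})$. Dividing by $N$, the theorem reduces to the single operator inequality $\tr(\Pi_\mathrm{sym}W^\text{GEN})\leq\gamma(d,k)/d^k$.

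The core of the argument is therefore to bound $\tr(\Pi_\mathrm{sym}W^\text{GEN})$ using only the normalization~\eqref{eq::wnorm}. Two facts drive this. Plugging the completely depolarizing channel, whose Choi operator is $\id/d$, into~\eqref{eq::wnorm} gives $\tr W^\text{GEN}=d^k$. Moreover, the constraint is invariant under the twirl $W^\text{GEN}\mapsto\int dA\,dB\,(A\otimes B)^{\otimes k}W^\text{GEN}(A^\dagger\otimes B^\dagger)^{\otimes k}$, with $A,B\in U(d)$ acting respectively on all input and all output copies: indeed $(A^\dagger\otimes B^\dagger)C(A\otimes B)$ is again the Choi operator of a channel whenever $C$ is, so the twirled operator is still a valid process matrix, while $\Pi_\mathrm{sym}$ commutes with $(A\otimes B)^{\otimes k}$, leaving $\tr(\Pi_\mathrm{sym}W^\text{GEN})$ unchanged. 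I may therefore assume $W^\text{GEN}$ is $U(d)\times U(d)$-invariant, which by Schur--Weyl duality forces it into the algebra $\mathbb{C}[S_k]\otimes\mathbb{C}[S_k]$ spanned by the operators $P^I_\sigma\otimes P^O_\tau$, $\sigma,\tau\in S_k$. The problem thus collapses to a finite linear program over the coefficients of $W^\text{GEN}$ in this basis.

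Writing $\Pi_\mathrm{sym}=\frac{1}{k!}\sum_{\pi\in S_k}P_\pi$ with $P_\pi=P^I_\pi\otimes P^O_\pi$, and recalling the identity $\gamma(d,k)=\tr\Pi_\mathrm{sym}=\frac{1}{k!}\sum_\pi (d^2)^{c(\pi)}$, where $c(\pi)$ is the number of cycles of $\pi$, the desired bound follows from the per-permutation estimate
\begin{equation}
\tr(P_\pi W^\text{GEN})\leq d^{\,2c(\pi)-k}\qquad\forall\,\pi\in S_k,
\end{equation}
since summing it with weights $1/k!$ reproduces exactly $\gamma(d,k)/d^k$. The two extreme cases are immediate: for $\pi=\mathrm{id}$ it is the equality $\tr W^\text{GEN}=d^k$, and for a single transposition at $k=2$ it reads $\tr(P_\mathrm{swap}W^\text{GEN})\leq1$. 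Establishing this estimate in general is the main obstacle: it must be extracted from the \emph{full} set of process-matrix constraints in~\eqref{eq::wnorm} (equivalently, the trace-and-replace conditions), not merely from $\tr W^\text{GEN}=d^k$, and after the twirl it becomes a concrete inequality among the coefficients $m_{\sigma\tau}$ constrained by the positivity of $W^\text{GEN}$ and the reduced normalization. I would organize it by decomposing $\pi$ into disjoint cycles and proving the bound cycle-by-cycle, exploiting the way the process conditions restrict the coupling between slots.

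Chaining the inequalities then yields $P^\text{GEN}\leq\frac1N d^k\tr(\Pi_\mathrm{sym}W^\text{GEN})\leq\frac1N d^k\cdot\frac{\gamma(d,k)}{d^k}=\frac{\gamma(d,k)}{N}$. As a consistency check, the bound is saturated when $\{U_i\}$ forms a group $k$-design: there $\frac1N\sum_i\dketbra{U_i}{U_i}^{\otimes k}=\int_{U(d)}dU\,\dketbra{U}{U}^{\otimes k}$, and Theorem~\ref{thm::unitarygroup} guarantees an optimal parallel tester, for which equality propagates back through each step of the chain.
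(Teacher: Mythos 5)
Your reduction has a fatal gap, and it sits exactly at the step you flag as the ``main obstacle'': the inequality $\tr\big(\Pi_\mathrm{sym}W^\text{GEN}\big)\leq\gamma(d,k)/d^k$, equivalently your per-permutation estimate $\tr\big(P_\pi W^\text{GEN}\big)\leq d^{\,2c(\pi)-k}$, is not merely hard to extract from the process constraints---it is false for valid processes, so no further use of Eq.~\eqref{eq::wnorm} can close it. Consider the parallel process $W=\ketbra{0\cdots0}{0\cdots0}^{I}\otimes\id^{O}$, which is a legitimate general process (indeed $\tr[W(C_1\otimes\cdots\otimes C_k)]=\prod_j\tr[\ketbra{0}{0}^{I_j}\tr_{O_j}C_j]=1$ for all channels). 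Since the slot permutation factorizes as $P_\pi=P_\pi^{I}\otimes P_\pi^{O}$ and $\ket{0\cdots0}$ is permutation invariant,
\begin{equation}
\tr\big(P_\pi W\big)=\bra{0\cdots0}P_\pi^{I}\ket{0\cdots0}\,\tr\big(P_\pi^{O}\big)=d^{\,c(\pi)}>d^{\,2c(\pi)-k}
\quad\text{for every }\pi\neq\mathrm{id},
\end{equation}
because $c(\pi)<k$. Summing over $S_k$ gives $\tr(\Pi_\mathrm{sym}W)=\binom{d+k-1}{k}$, which strictly exceeds $\binom{d^2+k-1}{k}/d^k$ for all $d,k\geq 2$ (for $k=2$: $(d^2+d)/2$ versus the required $(d^2+1)/2$). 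Your twirling step does not rescue this: for $k=2$ this $W$ twirls to $\tfrac{2}{d(d+1)}\Pi_\mathrm{sym}^{I}\otimes\id^{O}$, which still gives $\tr(P_\mathrm{swap}W)=d$. The root cause is the substitution $v_iv_i^\dagger\leq\Pi_\mathrm{sym}$ itself: the symmetric projector dominates the Choi operators of unitaries, but it equally dominates operators such as symmetric input states tensored with $\id$ on the outputs, on which valid processes legitimately have overlap of order $d^{\,c(\pi)}$; the process normalization cannot tell these apart.

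The paper's proof avoids this by never comparing $W^\text{GEN}$ to a projector. It uses the dual SDP [Eq.~\eqref{sdp::dual}]: it suffices to exhibit one Choi operator $\overline{W}$ of a \emph{non-signaling channel} and a constant $\lambda$ with $\frac{1}{N}\dketbra{U_i}{U_i}^{\otimes k}\leq\lambda\,\overline{W}$ for all $i$. The correct dominating operator is not $\Pi_\mathrm{sym}$ but the Haar average $\overline{W}=\int_\text{Haar}\dketbra{U}{U}^{\otimes k}\,\text{d}U$---supported on your symmetric subspace but with non-uniform spectrum---and the operator inequality $\dketbra{U_i}{U_i}^{\otimes k}\leq\gamma(d,k)\,\overline{W}$ is imported from the group-theoretic lemma of Hashimoto et al.\ (Lemma~\ref{lemma:hashimoto}, $E\leq\gamma_G\int U_g E U_g^\dagger\text{d}g$ with $\gamma_G=\sum_\mu d_\mu\min(m_\mu,d_\mu)$), applied to the compact group $\{\id^{\otimes k}\otimes U^{\otimes k}\}_{U\in SU(d)}$, with $\sum_\mu d_\mu^2=\binom{k+d^2-1}{k}$ by Schur's identity. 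If you want to keep your primal-side phrasing, the fix is to replace $d^k\,\Pi_\mathrm{sym}$ by $\gamma(d,k)\,\overline{W}$ in your chain and use that $\tr(W^\text{GEN}\overline{W})=1$ (which holds because $\overline{W}$ lies in the affine hull of product channels); proving the Hashimoto-type inequality is then the real content, and the symmetric projector alone is provably insufficient for it.
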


The proof of this theorem is in Appendix~\ref{app::upperbound}.

\begin{figure}
\begin{center}
	\includegraphics[width=\columnwidth]{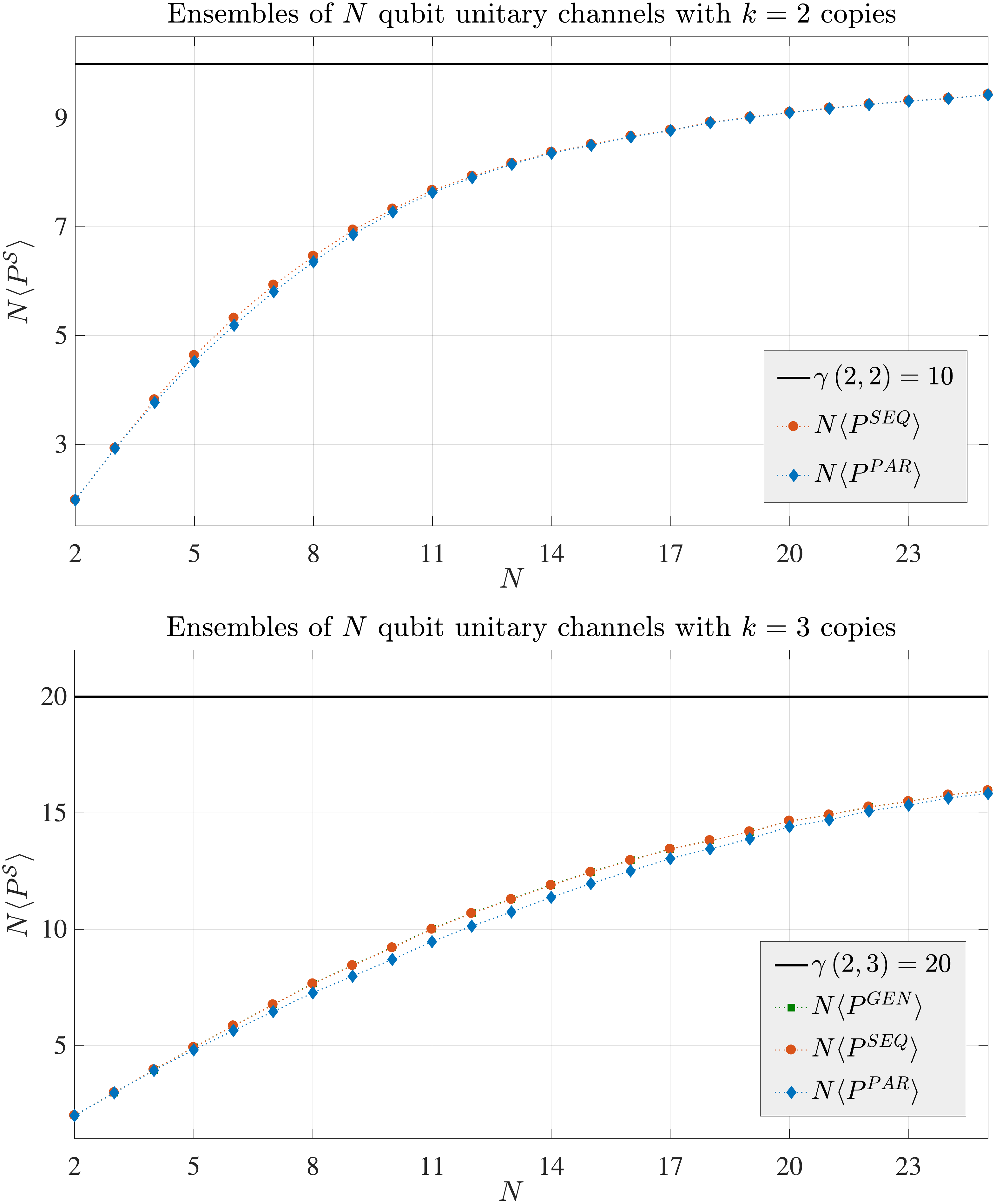}
	\caption{Upper bound and average of the maximal probability of successful discrimination of sets of $N\in\{2,\ldots,25\}$ qubit unitary operators sampled according to the Haar measure.
	The plotted curves correspond to the average $\avg{P^\Scal}$ multiplied by $N$ and the constant line is the analytical upper bound of $\gamma(2,k)$ derived in Thm.~\ref{thm::upper_bound}. Note how all $N\avg{P^\Scal}$ asymptotically approach the upper bound $\gamma(2,k)$ as $N$ increases, although somewhat slowly.
	Top: For $k=2$ copies, $N\avg{P^\text{PAR}}$ and $N\avg{P^\text{SEQ}}$ are plotted against the upper bound of $\gamma(2,2)=10$. To compute the averages, $1\,000$ sets were sampled for each $N$.
	Bottom: For $k=3$ copies, $N\avg{P^\text{PAR}}$, $N\avg{P^\text{SEQ}}$, and $N\avg{P^\text{GEN}}$ are plotted against the upper bound of $\gamma(2,3)=20$. The difference between plotted points of $N\avg{P^\text{SEQ}}$ and $N\avg{P^\text{GEN}}$ is not visible. To compute the averages, $500$ sets were sampled for each $N$.}
\label{fig::plots_limit}
\end{center}
\end{figure}

The upper bound given by Thm.~\ref{thm::upper_bound} can be attained by some particular choices of unitary channels that form a group (and hence, by Thm.~\ref{thm::gen_vs_seq_vs_par}, attainable by a parallel strategy). In particular, it follows from Refs.~\cite{chiribella04,hayashi05,chiribella06} that when sets of unitary operators $\{U_i\}_i$ form group $k$-designs, there exists a parallel strategy such that
\begin{align}
     P^\text{PAR}=\frac{1}{N}\,\gamma(d,k).
\end{align}

A set of unitary operators $\{ U_i\}_{i=1}^N$ forms a group $k$-design when the set forms a group and respects the relation
\begin{equation}\label{eq::kdesign}
    \int_\text{Haar} U^{\otimes k}\,\rho\ {U^{\otimes k}}^\dagger \text{d}U =
    \frac{1}{N}\sum_{i=1}^N U_i^{\otimes k}\,\rho\ {U_i^{\otimes k}}^\dagger
\end{equation}
for every linear operator $\rho$.

In Sec.~\ref{sec:numeric}, we analyzed the problem of discriminating $N$ unitary operators uniformly sampled according to the Haar measure. For any fixed dimension $d$ and number of copies $k$, if we sample a very large number $N$ of unitary operators $(N\to\infty)$, when the distribution $\{p_i\}_i$ is uniform, the maximal probability of success will be very small $(P^\Scal\to0)$. At the same time, due to the uniform properties of the Haar measure, if we uniformly sample a large number $N$ of unitary operators, we obtain a set that approaches the group of all unitary operators of the given dimension, which is equivalent to the group $SU(d)$, approximately satisfying the group $k$-design condition in Eq.~\eqref{eq::kdesign}. Therefore, for large $N$, we have the asymptotic behavior
\begin{equation}
    N \avg{P^\text{PAR}} \approx  N \avg{P^\text{SEQ}} \approx   N  \avg{P^\text{GEN}}\approx \gamma(d,k).
\end{equation}
This argument can be made more rigorous by recognizing that $SU(d)$ is a compact Lie group, treating probabilities as probability densities, and maximizing likelihood.~\cite{chiribella04,chiribella06}

As an example from our numerical calculations, in Fig.~\ref{fig::plots_limit}, for the case of $d=2$ and $k=2,3$, we can visualize the behavior of $N\avg{P^\Scal}$ as a function of $N$ and how it asymptotically approaches the constant $\gamma(d,k)$, which takes the values of $\gamma(2,2)=10$ and $\gamma(2,3)=20$, with increasing $N$.


\section{Conclusion}

We extended the unified tester formalism of Ref.~\cite{bavaresco20} to the case of $k$ copies and applied it particularly to the study of unitary channel discrimination. Our first contribution was to prove that, in a discrimination task of an ensemble of a set of unitary channels that forms a group and a uniform probability distribution, parallel strategies are always optimal, even when compared against the performance of general strategies.

Subsequently, we showed an example of a unitary channel discrimination task in which a sequential strategy outperforms any parallel strategy. Our result consists of the first demonstration of such a phenomenon, to the extent of our knowledge. Our task involves two copies and four unitary channels, which can be perfectly discriminated with a sequential strategy but not with a parallel one. We explicitly provided the optimal discrimination strategy for this task. 

We also showed that general strategies that involve indefinite causal order are advantageous for the discrimination of unitary channels. Our simplest example of this phenomenon is a task of discriminating among four unitary channels using three copies. While our optimal parallel and sequential strategies can be straightforwardly implemented with (ordered) quantum circuits, a potential quantum realization of the optimal general strategies that are advantageous in this scenario remains an open problem. We then demonstrated that general strategies that are created from switch-like transformations, which are known to be, in principle, implementable with quantum operations,~\cite{wechs21} can never perform better than sequential strategies for unitary channel discrimination. 

The final result of our work was an upper bound for the maximal probability of success of any set of unitary channels using any number of copies under any strategy. This ultimate bound applies to any possible discrimination strategy and was shown to be tight, attained by discrimination tasks of unitary groups that form a $k$-design.

The concrete examples of the advantages of sequential and general strategies in this work focused on discrimination tasks that use $k=2$ or $3$ copies. An open question of our work is how these strategy gaps would scale with larger values of $k$. The preliminary results presented here indicate that the advantage of sequential over parallel strategies and of general over sequential strategies should be even more accentuated as a higher number of copies are allowed. This idea is supported by the intuition that the number of different ways in which one can construct sequential strategies, as compared to parallel strategies, increases with the number of slots $k$. Similarly, we expect such a phenomenon to exist for the general case. It would then be interesting to find out exactly the rate with which these gaps increase with $k$.

No advantage of general strategies was found in scenarios involving discrimination of unitary channels using only $k=2$ copies. We conjecture that, when considering $k=2$ copies, such an advantage is, indeed, not possible for any number $N$ of unitary channels. We also remark that, when considering $k=2$ copies, Refs.~\cite{yokojima20,barrett20} proved that superchannels that preserve reversibility (i.e. transform unitary channels into unitary channels) are necessarily of the switch-like form. Intuitively, it seems plausible that the optimal general strategy for discriminating unitary channels would be one that transforms unitary channels into unitary channels. This argument of reversibility preservation combined with our Theorem~\ref{thm::switchlike} might lead to a proof of our conjecture. 

Furthermore, we also conjecture that, when considering $N=2$ unitary channels, general strategies are not advantageous for any number of copies $k$. In this scenario, it has been proven that sequential strategies cannot outperform parallel ones,~\cite{chiribella08-1} and we believe this to also be the case for general strategies. The task of discriminating between two unitary channels can always be recast as the problem of discriminating a unitary operator from the identity operator. In the parallel case, the probability of successful discrimination has been shown to be related to the spread of the eigenvalues of this unitary operator.~\cite{acin01,dariano01} The proof presented in Ref.~\cite{chiribella08-1} explores how sequential strategies affect the spread of the eigenvalues of unitary operators, to conclude that they cannot outperform the parallel ones. A better understanding of how general strategies affect the spread of the eigenvalues of unitary operators could lead to a conclusive answer for this conjecture.

Finally, an interesting open question that could follow our work is how quantum-specific are the phenomena presented here. For instance, Ref.~\cite{harrow10} showed that adaptive strategies may outperform parallel ones in classical channel discrimination; however, their examples did not concern a classical analog of unitary channels. Assuming the analog of a unitary channel in classical information theory to be a classical channel that maps deterministic probability distributions into deterministic probability distributions, i.e., channels that can be described by a permutation matrix, it would be interesting to investigate whether their discrimination could also be enhanced by sequential strategies. Additionally, Refs.~\cite{baumeler16,araujo16} showed that indefinite causal order also manifests itself in classical processes. Hence, an interesting question would be whether indefinite causality is also a useful resource for classical channel discrimination.

\begin{acknowledgments}
\noindent\textit{Acknowledgments.} The authors thank Cyril Branciard and an anonymous referee for the conference TQC~2022 for very useful comments on the first version of the manuscript. The authors are also grateful to Princess Bubblegum for performing demanding calculations.
J.B. acknowledges funding from the Austrian Science Fund (FWF) through Zukunftskolleg ZK03 and START Project Y879-N27.
M.M. was supported by the MEXT Quantum Leap Flagship Program (MEXT Q-LEAP) under Grant No. JPMXS0118069605 and by the Japan Society for the Promotion of Science (JSPS) through KAKENHI Grants Nos. 17H01694, 18H04286, and 21H03394. 
M.T.Q. acknowledges the Austrian Science Fund (FWF) through the SFB project BeyondC (sub-project F7103), a grant from the Foundational Questions Institute (FQXi) as part of the  Quantum Information Structure of Spacetime (QISS) Project (qiss.fr). The opinions expressed in this publication are those of the authors and do not necessarily reflect the views of the John Templeton Foundation. This project  received funding from the European Union’s Horizon 2020 Research and Innovation Programme under Marie Sk\l{}odowska-Curie Grant Agreement No. 801110 and the Austrian Federal Ministry of Education, Science and Research (BMBWF). It reflects only the authors' view and the EU Agency is not responsible for any use that may be made of the information it contains.
\end{acknowledgments}

\section*{Author declarations}
The authors have no conflicts to disclose.

\section*{Data Availability Statement}
Data sharing is not applicable to this article as no new data were created or analyzed in this study. 
\\

The code developed for this study is openly available at our online repository.~\cite{githubMTQ2}


\onecolumngrid
\appendix

\section*{Appendix}

The appendix is composed of the following sections: Appendix~\ref{app:processmatrices} presents the definition of $2$- and $3$-slot process matrices in terms of positivity and linear normalization constraints and a small review of the formulation of the maximal probability of successful discrimination in terms of primal and dual SDP problems; Appendix~\ref{app::proofunitarygroup} presents the proof of Theorem~\ref{thm::unitarygroup};
Appendix~\ref{app::proofseq_vs_par} presents the proof of Examples~\ref{ex::k2N4} and~\ref{ex::k2N8};
Appendix~\ref{app::proofgen_vs_seq_vs_par} presents the proof of Example~\ref{ex::k3N4}, 
Appendix~\ref{app::proofswitchlike} presents the proof of Theorem~\ref{thm::switchlike}; and, finally,
Appendix~\ref{app::upperbound} contains the proof of Theorem~\ref{thm::upper_bound}.
\\

Some of the sections in the Appendix will make use of the \textit{link product}~\cite{chiribella07} between two linear operators, which is a useful mathematical tool to compose linear maps that are represented by their Choi operators. If $\map{C} \coloneqq \map{B}\circ \map{A}$ is the composition of the linear maps  $\map{A}:\L(\H^1)\to\L(\H^2)$ and  $\map{B}:\L(\H^2)\to\L(\H^3)$, the Choi operator of $\map{C}$ is given by $C=A*B$, where $A$ and $B$ are the Choi operators of $\map{A}$ and $\map{B}$, respectively, and $*$ stands for the link product, which we now define.
Let $A\in\L(\H^1\otimes\H^2)$ and $B\in\L(\H^2\otimes\H^3)$ be linear operators. The link product $A*B\in\L(\H^1\otimes\H^3)$ is defined as
\begin{equation}
    A^{12}*B^{23}\coloneqq \tr_2\left[\Big((A^{12})^{T_2} \otimes \id^3\Big)\, \Big(\id^1 \otimes B^{23}\Big)\right], 
\end{equation}
where $(\cdot)^{T_2}$ stands for the partial transposition on the linear space $\H^2$.

We remark that identifying the linear spaces where the operators act is an important part of the link product, also, if we keep track on these linear spaces, the link product is commutative and associative.


\section{Semidefinite programming formulation and linear constraints for general processes of $k=2$ and $k=3$ slots}\label{app:processmatrices}

For sake of completeness, in the following, we present a characterization of general processes with $k=2$ and $k=3$ slots, which are bipartite and tripartite process matrices, in terms of positivity constraints and linear normalization constraints. While positivity is a consequence of physical considerations, the linear normalization constraints are a direct implication of Eq.~\eqref{eq::wnorm}. 
We refer to Ref.~\cite{araujo15} for a detailed explanation of the method and Ref.~\cite{quintino19} for the explicit characterization of general processes with $k=3$ slots.

We recall that we represent the trace-and-replace operation on $\Hcal^X$ as $_{X}(\cdot)=\tr_X(\cdot)\otimes\frac{\id^X}{d_X}$, where $d_X=\text{dim}(\Hcal^X)$.

Let $W\in\Lcal(\Hcal^{A_IA_O}\otimes\Hcal^{B_IB_O})$ be a $2$-slot (bipartite) process matrix. Then,
\begin{align}
    W&\geq0 \\
    \tr(W) &= d_{A_O}d_{B_O} \\
    _{A_IA_O}W &= _{A_IA_OB_O}W \\
    _{B_IB_O}W &= _{A_OB_IB_O}W \\
    W + _{A_OB_O}W &= _{A_O}W + _{B_O}W. 
\end{align}

Now let $W\in\Lcal(\Hcal^{A_IA_O}\otimes\Hcal^{B_IB_O}\otimes\Hcal^{C_IC_O})$ be a $3$-slot (tripartite) process matrix. Then,
\begin{align}
    W&\geq0 \\
    \tr(W) &= d_{A_O}d_{B_O}d_{C_O} \\
    _{A_IA_OB_IB_O}W &= _{A_IA_OB_IB_OC_O}W \\
    _{B_IB_OC_IC_O}W &= _{A_OB_IB_OC_IC_O}W \\
    _{A_IA_OC_IC_O}W &= _{A_IA_OB_OC_IC_O}W \\
    _{A_IA_O}W + _{A_IA_OB_OC_O}W  &= _{A_IA_OB_O}W + _{A_IA_OC_O}W \\
    _{B_IB_O}W + _{A_OB_IB_OC_O}W  &= _{A_OB_IB_O}W + _{B_IB_OC_O}W \\
    _{C_IC_O}W + _{A_OB_OC_IC_O}W  &= _{A_OC_IC_O}W + _{B_OC_IC_O}W \\
    W + _{A_OB_O}W + _{A_OC_O}W + _{B_OC_O}W &= _{A_O}W + _{B_O}W + _{C_O}W + _{A_OB_OC_O}W. 
\end{align}

We also recall that the maximal probability of successful discrimination in Eq.~\eqref{eq:SDP} can be equivalently expressed by the primal SDP problem
\begin{flalign}\label{sdp::primal}
\begin{aligned}
    \textbf{given}\ \      &\{p_i,C_i\} \\
    \textbf{maximize}\ \   &\sum_i p_i \tr\left(T^\Scal_i\,C_i^{\otimes k}\right) \\ 
    \textbf{subject to}\ \ &T^\Scal_i\geq 0\ \forall\,i, \ \ \ \sum_iT^\Scal_i=W^\Scal,
\end{aligned}&&
\end{flalign}
or the dual problem
\begin{flalign}\label{sdp::dual}
\begin{aligned}
    \textbf{given}\ \ &\{p_i,C_i\} \\
    \textbf{minimize}\ \   &\lambda \\ 
    \textbf{subject to}\ \  &p_i\,C_i^{\otimes k}\leq \lambda\,\overline{W}^\Scal \ \ \forall\,i,
\end{aligned}&&
\end{flalign}
the latter of which can be straightforwardly phrased as an SDP by absorbing the coefficient $\lambda$ into $\overline{W}^\Scal$, which is the dual affine of process matrix $W^\Scal$.~\cite{bavaresco20} For parallel processes $W^\text{PAR}$, their dual affine $\overline{W}^\text{PAR}$ is a general $k$-partite quantum channel, for sequential processes $W^\text{SEQ}$, their dual affine $\overline{W}^\text{SEQ}$ is a $k$-partite channel with memory, and for general processes $W^\text{GEN}$, their dual affine $\overline{W}^\text{GEN}$ is a $k$-partite non-signaling channel. We refer to Ref.~\cite{bavaresco20} for more details.
In principle, any solution of the primal problem gives an upper bound to $P^\Scal$ and any solution of the dual problem, a lower bound. From strong duality, it is guarantee that the optimal solution of both problems coincides. Both problems are used in the computer-assisted-proof method applied in this paper.


\section{Proof of Theorem~\ref{thm::unitarygroup}}\label{app::proofunitarygroup}
\setcounter{theorem}{0}

We start this section with Lemma~\ref{lemma:bruto}, which plays a main role in the proof of Thm~\ref{thm::unitarygroup}, and may be of independent interest. The theorems presented in this section employ methods that are similar to the ones in Ref.~\cite{dariano01,bisio13}, which exploit the covariance of processes to parallelize strategies.
    
\begin{lemma}\label{lemma:bruto}
    Let $\{T_U\}_U$, $T_U\in\L(\H^I\otimes\H^O)$, be a general $k$-slot tester associated with the general process $W\coloneqq\sum_U T_U$, that respects the commutation relation 
    \begin{equation}
        W^{IO} \, \left(\id \otimes U^{\otimes k}\right)^{IO} = 
        \left(\id \otimes U^{\otimes k}\right)^{IO}\, W^{IO} ,
    \end{equation}
    for every unitary operator $U\in\L(\mathbb{C}^d)$ from a set $\{U\}_U$.

    Then, there exists a parallel $k$-slot tester $\{T_U^\text{PAR}\}_i$ such that
    \begin{equation}
        \tr\Big(T^\text{PAR}_U \dketbra{U}{U}^{\otimes k} \Big) =
        \tr\Big( T_U \dketbra{U}{U}^{\otimes k}\Big)
        \quad \forall \ U\in\{U\}_U.
    \end{equation}

    Moreover, this parallel tester can be written as $T_U^\text{PAR}=\rho^{I'I}*M_U^{I'O}$ where
    $\H^{I'}$ is an auxiliary space which is isomorphic to $\H^I$, $\rho\in\L(\H^I\otimes\H^{I'})$ is a quantum state defined by
    \begin{equation}
        \rho^{I'I} \coloneqq {\sqrt{W}^T}^{I'I}\,\dketbra{\id}{\id}^{I'I}\,{\sqrt{W}^T}^{I'I} ,
    \end{equation}
    and $\{M_U\}_U$ is a POVM defined by%
\interfootnotelinepenalty=10000
\footnote{Here, $\sqrt{W}^{-1}$ stands for the inverse of $\sqrt{W}$ on its range. If the operator $W$ is not full-rank, the composition $W\,W^{-1}=:\Pi_W$ is not the identity $\id$ but the projector onto the subspace spanned by the range of $\sqrt{W}$. Due to this technicality, when the operator $W$ is not full-rank, we should define the measurements as
$ M_U^{I'O} \coloneqq {\sqrt{W}^{-1}}^{I'O}\, {T_U}^{I'O}\, {\sqrt{W}^{-1}}^{I'O} + \frac{1}{N}(\id-W W^{-1})$. With that, the proof written here also applies to the case where the operator $W$ is not full-rank.}
    \begin{equation}
        M_U^{I'O} \coloneqq {\sqrt{W}^{-1}}^{I'O}\, {T_U}^{I'O}\, {\sqrt{W}^{-1}}^{I'O}.
    \end{equation}
\end{lemma}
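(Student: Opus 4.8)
The plan is to exploit the covariance encoded in the commutation relation to ``parallelize'' the general strategy, in the same spirit as the covariance arguments mentioned above. Two things must be established: that $\rho$ and $\{M_U\}_U$ are a legitimate state and POVM, so that $T^\text{PAR}_U=\rho^{I'I}*M_U^{I'O}$ is a bona fide parallel tester, and the claimed equality of success probabilities. The single fact driving everything is that the positive square root inherits the commutation relation: since $\id\otimes U^{\otimes k}$ is unitary and commutes with $W\geq 0$, it commutes with every spectral projector of $W$, hence with $\sqrt W$, with $\sqrt W^{-1}$ (the inverse on the range of $W$), and with the range projector $\Pi_W\coloneqq\sqrt W\,\sqrt W^{-1}$. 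Transposing this relation shows $\sqrt W^T$ commutes with $\id\otimes(U^T)^{\otimes k}$, which is precisely the partner needed once the auxiliary leg $\H^{I'}$ is linked to the output through the maximally entangled state.

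First I would verify validity. Positivity of $\rho$ and of each $M_U$ is immediate, as both are Hermitian sandwiches of positive operators ($\sqrt W^T$ and $\sqrt W^{-1}$ are Hermitian, while $\dketbra{\id}{\id}\geq 0$ and $T_U\geq 0$). For the normalization of $\rho$ I would use $(\sqrt W^T)^2=W^T$ together with the \emph{ricochet} identity to re-express $\tr\rho=\tr\!\big[W^T\dketbra{\id}{\id}^{I'I}\big]$ as $\tr\!\big[W\,\dketbra{\id}{\id}^{\otimes k}\big]$; this equals $1$ because $\dketbra{\id}{\id}^{\otimes k}$ is the Choi operator of $k$ identity channels and $W$ obeys the general-tester normalization of Eq.~\eqref{eq::wnorm}. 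Completeness of the POVM follows from $\sum_U M_U=\sqrt W^{-1}W\sqrt W^{-1}=\Pi_W$, which together with the footnote correction $\tfrac1N(\id-\Pi_W)$ sums to $\id$. Finally, $\sum_U T^\text{PAR}_U=\rho^{I'I}*\id^{I'O}=(\tr_{I'}\rho)^{I}\otimes\id^O=\sigma^I\otimes\id^O$ with $\tr\sigma=1$, which is exactly a parallel process; positivity of each $T^\text{PAR}_U$ holds because the link product represents the composition of the completely positive maps $\rho$ and $M_U$, whose Choi operator is positive.

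For the central identity I would first rewrite the channel Choi operator as $\dketbra{U}{U}^{\otimes k}=(\id^I\otimes U^{\otimes k})\,\dketbra{\id}{\id}^{\otimes k}\,(\id^I\otimes U^{\otimes k})^\dagger$, and express the parallel success probability through the realization of $T^\text{PAR}_U=\rho^{I'I}*M_U^{I'O}$ [Fig.~\ref{fig::realization}(a)]: prepare $\rho$, send the $\H^I$ systems through the $k$ copies of $U$, and measure $\{M_U\}_U$ on the output. Substituting the definitions of $\rho$ and $M_U$, the probability becomes a single trace built from $\sqrt W^{-1}\,T_U\,\sqrt W^{-1}$, a factor $U^{\otimes k}$, and $\sqrt W^T\,\dketbra{\id}{\id}\,\sqrt W^T$. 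I would then (i) use the ricochet identity to move each $\sqrt W^T$ from the $\H^{I'}$ leg onto the $\H^O$ leg through $\dketbra{\id}{\id}$, (ii) invoke the commutation relation to slide the resulting $\sqrt W$ factors past $\id\otimes U^{\otimes k}$ so they meet the $\sqrt W^{-1}$ factors coming from $M_U$, collapsing each pair to the projector $\Pi_W$, and (iii) absorb these projectors using $0\leq T_U\leq W$, which forces $\Pi_W\,T_U\,\Pi_W=T_U$. What remains is exactly $\tr\!\big[T_U\,(\id^I\otimes U^{\otimes k})\,\dketbra{\id}{\id}^{\otimes k}\,(\id^I\otimes U^{\otimes k})^\dagger\big]=\tr\big(T_U\,\dketbra{U}{U}^{\otimes k}\big)$, as claimed.

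The hard part will be purely the bookkeeping: keeping consistent track of which leg ($\H^I$, $\H^O$, or the auxiliary $\H^{I'}$) each copy of $\sqrt W$, $\sqrt W^T$, and $U^{\otimes k}$ acts on, and applying the transpose/ricochet identities in the right order so that the commutation relation can actually be invoked—this is precisely why the state is built with $\sqrt W^T$ while the measurement is built with $\sqrt W^{-1}$. A secondary technical point is the non-full-rank case: when $W$ is not invertible one uses the inverse on its range, so $\sqrt W\,\sqrt W^{-1}=\Pi_W\neq\id$ and the POVM must be completed by $\tfrac1N(\id-\Pi_W)$ as in the footnote; since each $T_U$ is supported on the range of $W$, this extra term does not alter any of the traces above and only restores completeness.
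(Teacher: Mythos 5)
Your validity checks (normalization of $\rho$ via the general-process condition, POVM completeness, positivity of the link product of positive operators, and the support argument that $0\leq T_U\leq W$ forces $\Pi_W T_U\Pi_W=T_U$) are all correct and essentially identical to the paper's. The gap is in the central trace manipulation, at your step (i). The ricochet identity $(A\otimes\id)\dket{\id}=(\id\otimes A^T)\dket{\id}$ applies only to operators supported on a \emph{single} leg of the maximally entangled state, whereas ${\sqrt{W}^T}{}^{I'I}$ acts jointly on both legs $\H^{I'}$ and $\H^{I}$; there is no identity converting ${\sqrt{W}^T}{}^{I'I}\dket{\id}^{I'I}$ into $\sqrt{W}$ (untransposed) placed on any choice of legs. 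This matters because the hypothesis gives $[\sqrt{W},\id\otimes U^{\otimes k}]=0$, hence $[\sqrt{W}^T,\id\otimes (U^*)^{\otimes k}]=0$ and $[\sqrt{W}^T,\id\otimes (U^T)^{\otimes k}]=0$, but $\sqrt{W}^T$ does \emph{not} commute with $\id\otimes U^{\otimes k}$ itself. In your Schr\"odinger-picture expression the unitaries appear untransposed next to $\sqrt{W}^T$, so once step (i) fails there is nothing for step (ii) to slide past, and the computation does not close.

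The issue you are glossing over is a transpose hidden in the realization: with the link-product conventions, $\tr\big[(\rho*M_U)\,\dketbra{U}{U}^{\otimes k}\big]=\tr\big[M_U\,(\map{\id}\otimes\map{U}^{\otimes k})(\rho^T)\big]$, i.e., realizing the tester $\rho*M_U$ physically means preparing the \emph{transposed} state $\rho^T=\sqrt{W}\,\dketbra{\id}{\id}\,\sqrt{W}$, not $\rho$. Preparing the paper's $\rho$ and measuring $\{M_U\}$ instead yields the tester $\rho^T*M_U$, which does not satisfy the lemma in general (already for $k=1$ and a non-real $W$ one finds residual factors of the form $\sqrt{W^T}\,\sqrt{W}^{-1}$ and $U^T$ in place of $U$). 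The paper sidesteps this by never leaving the link-product picture: there the pairing $\tr(T_U C)$ transposes the Choi operator, $\dketbra{U}{U}^T=\dketbra{U^*}{U^*}$, and it is against $\id\otimes (U^*)^{\otimes k}$ that $\sqrt{W}^T$ \emph{does} commute --- this is precisely why the state is defined with $\sqrt{W}^T$. Your argument becomes correct (and arguably cleaner than the paper's) with one localized fix: take the physically prepared state to be $\sqrt{W}\,\dketbra{\id}{\id}\,\sqrt{W}$; then the commutation relation applies directly, the post-channel state is $\sqrt{W}\,\dketbra{U}{U}^{\otimes k}\sqrt{W}$, and your steps (ii) and (iii) finish the proof with no ricochet needed at all, the associated link-product tester being exactly $\rho*M_U$ as stated in the lemma.
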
 

\begin{proof}
We start our proof by verifying that $\rho\in\L(\H^{I'}\otimes\H^{I})$ is a valid quantum state. The operator $\rho$ is positive semidefinite because it is a composition of positive semidefinite operators and the normalization condition follows from
\begin{align}
    \tr(\rho) =&\tr\left({\sqrt{W}^T}^{I'I}\,\dketbra{\id}{\id}^{I'I}\,{\sqrt{W}^T}^{I'I} \right)   \\
    =&\tr\left({W^T}^{I'I}\dketbra{\id}{\id}^{I'I}\right)  \\
    =&\tr\left({W}^{I'I}{\dketbra{\id}{\id}^T}^{I'I}\right)  \\
    =&\tr\left({W}^{I'I}\dketbra{\id}{\id}^{I'I}\right)  \\
    =& 1,
\end{align}
where the last equation holds because, since $W$ is a general process, it satisfies $\tr(WC)=1$ for any $C$ which is the Choi operator of a channel.

Let us now verify that the set of operators $\{ M_U\}_U$ forms a valid POVM. For that it is enough to recognize that all operators $M_U$ are compositions of positive semidefinite operators that add up to the identity, according to%

\begin{align}
    \sum_U M_U
    =&{\sqrt{W}^{-1}}^{I'O}\, {\sum_U T_U}^{I'O}\, {\sqrt{W}^{-1}}^{I'O}\\
    =&{\sqrt{W}^{-1}}^{I'O}\, W^{I'O}\, {\sqrt{W}^{-1}}^{I'O}\\
    =&\id^{I'O}.
\end{align}
The relation $\sqrt{W}^{-1}\,W\,\sqrt{W}^{-1}=\id$ can be shown by writing $W$ in an orthonormal basis as $W=\sum_i \alpha_i \ketbra{i}{i}$ and $\sqrt{W}^{-1}=\sum_i\alpha_i^{-1/2}\ketbra{i}{i}$.

Recall that for any unitary operator $U$, we have the identity $\dketbra{U}{U}^T=\dketbra{U^*}{U^*}$, and if $C^{IO}$ is the Choi operator of a linear map $\map{C}:\L(\H^I)\to\L(\H^O)$, $\rho^{I'I}\in\L(\H^{I'}\otimes\H^I)$, it holds that $\rho^{I'I}*C^{IO}= \left(\map{\id}\otimes\map{C}(\rho^{I'I})\right)^{I'O}$. In addition, if a diagonalizable operator $W^{IO}$ commutes with $\left(\id \otimes U^{\otimes k}\right)^{IO}$, its positive semidefinite square root $\sqrt{W}$ also commutes with%
\interfootnotelinepenalty=10000
\footnote{Indeed, two diagonalizable operators $A$ and $B$ commute if and only if they are diagonal in the same basis. Now, since $A=\sum_i\alpha_i\ketbra{i}{i}$, its square root, $\sqrt{A}\coloneqq\sum_i\sqrt{\alpha_i}\ketbra{i}{i}$, is by definition also diagonal in the same basis. Hence, if $A$ commutes with $B$, $\sqrt{A}$ also commutes with $B$.}%
$\left(\id \otimes U^{\otimes k}\right)^{IO}$; hence, we have
\begin{equation} \label{eq:root_commute}
    \sqrt{W}\,(\id\otimes {U}^{\otimes k})=(\id\otimes {U}^{\otimes k})\, {\sqrt{W}}.
\end{equation}
By taking the complex conjugation on both sides of Eq.\,\eqref{eq:root_commute} and exploiting the fact that $\sqrt{W}=\sqrt{W}^\dagger$ implies $\sqrt{W}^T=\sqrt{W}^*$, it holds that
\begin{equation}
    {\sqrt{W}^T}\,(\id\otimes {U^*}^{\otimes k})=(\id\otimes {U^*}^{\otimes k})\,{\sqrt{W}^T} .
\end{equation}
With these identities in hand, we can evaluate the link product $\rho^{I'I}*\left(\dketbra{U^{\otimes k}}{U^{\otimes k}}^T\right)^{IO}$, which will be used in the next step of the proof, to obtain
\begin{align}
    \rho^{I'I} * \left({\dketbra{U}{U}^{\otimes k}}^T\right)^{IO}
    =& 
    \rho^{I'I} * \left({\dketbra{U^*}{U^*}^{\otimes k}}\right)^{\, IO} \\
    =& 
    \left[(\id\otimes {U^*}^{\otimes k}) 
    \,\rho\, 
    (\id\otimes {U^T}^{\otimes k})\right]^{I'O} \\
    =& 
    \left[(\id\otimes {U^*}^{\otimes k}) 
    \,\sqrt{W}^T\, 
    \dketbra{\id}{\id} 
    \,{\sqrt{W}^T}\, 
    (\id\otimes {U^T}^{\otimes k})\right]^{I'O} \\
    =& 
    \left[{\sqrt{W}^T}\, 
    (\id\otimes {U^*}^{\otimes k}) 
    \,\dketbra{\id}{\id}\, 
    (\id\otimes {U^T}^{\otimes k})\, 
    {\sqrt{W}^T}\right]^{I'O} \\
    =& 
    \left({\sqrt{W}^T}
    \,\dketbra{U^*}{U^*}^{\otimes k}\,
    {\sqrt{W}^T} \right)^{I'O}. \label{eq::trick}
\end{align}

We now finish the proof by verifying that
\begin{align}
    \tr\left(T_U^\text{PAR} \, \dketbra{U}{U}^{\otimes k}\right)  
    &= 
    \tr\left[ 
    (\rho^{I'I}*M_U^{I'O}) \, 
    {\dketbra{U}{U}^{\otimes k}}^{IO}
    \right] \\
    &= 
    (\rho^{I'I}*M_U^{I'O}) * 
    {(\dketbra{U}{U}^{\otimes k})^T}^{\,IO} \\
    &= M_U^{I'O} * 
    \left(\rho^{I'I} * {\dketbra{U^*}{U^*}^{\otimes k}}^{\,IO}\right) \\
    &= 
    M_U^{I'O} * 
    \left({\sqrt{W}^T} \,\dketbra{U^*}{U^*}^{\otimes k}\, {\sqrt{W}^T} \right)^{I'O}
    \quad\quad\quad \text{(applying Eq.~\eqref{eq::trick})} \\
    &= 
    \tr\left[M_U^{I'O} {\left(
    {\sqrt{W}^T} 
    \,\dketbra{U^*}{U^*}^{\otimes k}\, 
    {\sqrt{W}^T} \right)^{T}}^{I'O}\right]\\
    &= 
    \tr\left[M_U^{I'O} \left(
    {\sqrt{W}} 
    \,\dketbra{U}{U}^{\otimes k}\, 
    {\sqrt{W}} \right)^{I'O}\right]\\
    &= 
    \tr\left[ \left(
    {\sqrt{W}^{-1}}\,T_U\,{\sqrt{W}^{-1}}\right)^{I'O} 
    \left({\sqrt{W}}\,\dketbra{U}{U}^{\otimes k}\,{\sqrt{W}} \right)^{I'O}\right]\\
    &= \tr\left(T_U \, \dketbra{U}{U}^{\otimes k}\right).
\end{align}	
\end{proof}

Now, we prove Theorem~\ref{thm::unitarygroup}.

\begin{theorem}
    Let $\mathcal{E}=\{p_U,\dketbra{U}{U}\}_U$ be an ensemble of unitary channels where the set of unitary operators $U\in\L(\mathbb{C}^d)$, $\{U\}_U$ forms a group up to a global phase---that is, there exist real numbers $\phi_i$ such that
    \begin{itemize}
	    \item  $e^{i \phi_\id}\id \in \{U\}_U$
	    \item If $A \in \{U\}_U$, then  $e^{i\phi_A}A^{-1}\in \{U\}_U$
	    \item If $A,B \in \{U\}_U$, then  $e^{i\phi_{AB}}AB\in \{U\}_U$,
    \end{itemize} 
    and the distribution $\{p_U\}_U$ is uniform---that is, if the set has $N$ elements, $p_U=\frac{1}{N}$.

    Then, for any number of uses $k$ and every general tester $\{T^\text{GEN}_U\}_U$, $T^\text{GEN}_U\in\mathcal{L}(\H^I\otimes\H^O)$ there exists a parallel tester $\{T^\text{PAR}_U\}_U$, $T^\text{PAR}_U\in\mathcal{L}(\H^I\otimes\H^O)$, such that
    \begin{equation}
	    \frac{1}{N} \sum_{U\in\{U\}_U} \tr\left(T^\text{GEN}_U \dketbra{U}{U}^{\otimes k}\right) 
	    =
	    \frac{1}{N} \sum_{U\in\{U\}_U} \tr\left(T^\text{PAR}_U \dketbra{U}{U}^{\otimes k}\right).
    \end{equation}
\end{theorem}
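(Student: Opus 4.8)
The plan is to deduce Theorem~\ref{thm::unitarygroup} from Lemma~\ref{lemma:bruto} by a group-twirling argument: starting from an arbitrary general tester I would symmetrize it over the group to obtain a new general tester that (i) achieves exactly the same average probability of success and (ii) whose associated process commutes with every $(\id\otimes U^{\otimes k})$, so that Lemma~\ref{lemma:bruto} applies directly and yields the desired parallel tester. The whole argument rests on the two hypotheses being used in separate places: group closure supplies the relabeling bijections $V\mapsto U_0 V$ and $U\mapsto V^{-1}U$ on the index set, while uniformity of $\{p_U\}$ guarantees that these relabelings leave the weighted objective invariant.

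Concretely, given a general tester $\{T^\text{GEN}_U\}_U$ with process $W\coloneqq\sum_U T^\text{GEN}_U$, I would define the twirled operators
\begin{equation}
    \map{T}_U\coloneqq\frac{1}{N}\sum_{V\in\{U\}_U}\big(\id\otimes V^{\otimes k}\big)\,T^\text{GEN}_{V^{-1}U}\,\big(\id\otimes V^{\dagger\otimes k}\big),
\end{equation}
where the sum runs over the $N$ group elements and $V^{\otimes k}$ acts on the joint output space $\Hcal^O$. First I would check that $\{\map{T}_U\}_U$ is a valid general tester: each $\map{T}_U\geq0$ as a sum of conjugations of positive operators, and its process $\map{W}\coloneqq\sum_U\map{T}_U=\frac{1}{N}\sum_V(\id\otimes V^{\otimes k})\,W\,(\id\otimes V^{\dagger\otimes k})$ (using that $V^{-1}U$ sweeps the whole group as $U$ does) is again a general process, since conjugating a channel Choi operator on its output by $V$ yields another channel Choi operator, so the normalization constraint~\eqref{eq::wnorm} is preserved and $\map{W}$ is a convex mixture of valid processes.

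The two key verifications come next. For the commutation relation, replacing the averaging variable $V\mapsto U_0 V$ shows that $(\id\otimes U_0^{\otimes k})\,\map{W}\,(\id\otimes U_0^{\dagger\otimes k})=\map{W}$ for every group element $U_0$, which is exactly the hypothesis of Lemma~\ref{lemma:bruto}. For the objective, I would use the identity $(\id\otimes V^{\dagger\otimes k})\dket{U}^{\otimes k}=\dket{V^{-1}U}^{\otimes k}$, a consequence of $\dket{U}=(\id\otimes U)\dket{\id}$, to rewrite
\begin{equation}
    \frac{1}{N}\sum_U\tr\!\Big(\map{T}_U\,\dketbra{U}{U}^{\otimes k}\Big)=\frac{1}{N^2}\sum_{U,V}\tr\!\Big(T^\text{GEN}_{V^{-1}U}\,\dketbra{V^{-1}U}{V^{-1}U}^{\otimes k}\Big),
\end{equation}
and then substitute $U'=V^{-1}U$; the inner sum over $U'$ becomes $V$-independent and the factor $\tfrac1N\sum_V$ collapses, recovering $\frac1N\sum_{U'}\tr(T^\text{GEN}_{U'}\dketbra{U'}{U'}^{\otimes k})$. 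This is precisely where uniformity is essential---with a non-uniform distribution the shifted weights $p_{V^{-1}U}$ would not recombine.

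Having produced a covariant general tester with the same objective value and a process commuting with all $(\id\otimes U^{\otimes k})$, I would invoke Lemma~\ref{lemma:bruto} to obtain a parallel tester $\{T^\text{PAR}_U\}_U$ satisfying $\tr(T^\text{PAR}_U\dketbra{U}{U}^{\otimes k})=\tr(\map{T}_U\dketbra{U}{U}^{\otimes k})$ term by term, and summing with uniform weights closes the chain of equalities. I expect the main obstacle to be bookkeeping rather than conceptual: one must handle the ``up to a global phase'' caveat carefully, checking that the conjugations, the relabelings, and the $\dketbra{\cdot}{\cdot}$ operators are all phase-independent so that the group action descends to the index set, and one must verify cleanly that the twirl sends general processes to general processes. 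The substantive idea---that covariance is exactly what Lemma~\ref{lemma:bruto} needs, and that twirling over a group with uniform weights installs this covariance for free---is the crux.
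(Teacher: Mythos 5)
Your proposal is correct and follows essentially the same route as the paper's proof: the same group-twirl of the general tester (your conjugation by $\id\otimes V^{\otimes k}$ with index $V^{-1}U$ becomes the paper's conjugation by $\id\otimes V^{\dagger\otimes k}$ with index $VU$ after the substitution $V\mapsto V^{-1}$ in the group sum), the same three verifications (validity of the twirled tester, invariance of the uniform-weighted objective via relabeling, and commutation of the twirled process with $\id\otimes U^{\otimes k}$), and the same final invocation of Lemma~\ref{lemma:bruto}. The points you flag as bookkeeping (phase-independence of $\dketbra{U}{U}$ and preservation of the general-process constraints under twirling) are exactly the details the paper spells out, so nothing substantive is missing.
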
    

Before presenting the proof, we recall that unitary operators that are equivalent up to a global phase represent equivalent unitary channels. That is, if $U'=e^{i \phi} U:\H^I\to\H^O$ is a linear operator, its associated map is given by 
\begin{align}
    \map{U'}(\rho)&=U'\rho {U'}^\dagger \\
    &=e^{i\phi}e^{-i\phi}U\rho {U}^\dagger \\
    &=U\rho {U}^\dagger \\
    &=\map{U}(\rho)
\end{align}
and its Choi operator $\dketbra{U}{U}$ respects
\begin{align}
    \dketbra{U}{U}=\dketbra{U'}{U'}.
\end{align}
Due to this fact, the two sets of operators $\{U_i\}_i$ and $\{e^{i\phi_i}U_i\}_i$ represent the same set of quantum channels.

\begin{proof}
The proof goes as follows: we start by using the general tester $\{T_U^\text{GEN}\}_U$ to construct another general tester $\{T_U\}_U$ which obeys
    \begin{equation}
	    \frac{1}{N} \sum_{U\in\{U\}_U} \tr\left(T^\text{GEN}_U \dketbra{U}{U}^{\otimes k}\right) 
	    =
	    \frac{1}{N} \sum_{U\in\{U\}_U} \tr\left(T_U \dketbra{U}{U}^{\otimes k}\right).
    \end{equation}
Then, we prove that the general tester $\{T_U\}_U$ we defined respects the hypothesis of Lemma~\ref{lemma:bruto}. Hence, there exists a parallel tester $\{T_U^\text{PAR}\}_U$ which is equivalent to $\{T_U\}_U$ when acting on the set of unitary operators $\{U\}_U$.

Let us start by defining the general tester $\{T_U\}_U$ as:
\begin{equation}
	T_U \coloneqq 
	\frac{1}{N} \sum_{V\in\{U\}_U} 
	\left( \id^{I}\otimes V^{\dagger^{\otimes k}}\right)^{IO}
	\,{T^\text{GEN}_{VU}}^{\,IO}\, 
	\left( \id^{I}\otimes V^{{\otimes k}}\right)^{IO},
\end{equation}
where $VU$ stands for the standard operator composition up to a global phase, that is, if $VU$ is not in the set $\{U\}_U$, we pick $e^{i\phi_{VU}} VU$, which is ensured to be an element of $\{U\}_U$.
Before proceeding, we should verify that the set of operators $\{T_U\}_U$ is indeed a valid general tester. Note that since $T_U$ is a composition of positive semidefinite operators, it holds that $T_U\geq0$ for every $U$. We now show that $W\coloneqq\sum_U T_U$ is a valid general process. First, note that
\begin{align}
	W &\coloneqq \sum_U T_U \\ 
	&= 
	\sum_U \frac{1}{N} \sum_V 
	(\id\otimes V^{\dagger^{\otimes k}}) 
	\,T^\text{GEN}_{VU}\, 
	(\id\otimes V^{{\otimes k}}) \\
    &= 
    \frac{1}{N} \sum_U \sum_V 
    (\id\otimes V^{\dagger^{\otimes k}}) 
    \,T^\text{GEN}_{V(V^{-1}U)}\, 
    (\id\otimes V^{{\otimes k}}) \label{eq:group_trick}  \\
    &= 
    \frac{1}{N} \sum_U \sum_V 
    (\id\otimes V^{\dagger^{\otimes k}}) 
    \,T^\text{GEN}_{U}\, 
    (\id\otimes V^{{\otimes k}}) \\
    &= 
    \frac{1}{N} \sum_V 
    (\id\otimes V^{\dagger^{\otimes k}}) 
    \,\sum_U T^\text{GEN}_{U}\, 
    (\id\otimes V^{{\otimes k}}) \\
    &= 
    \frac{1}{N} \sum_V 
    (\id\otimes V^{\dagger^{\otimes k}}) 
    \,W^\text{GEN}\, 
    (\id\otimes V^{{\otimes k}}),
\end{align}
where $W^\text{GEN}\coloneqq\sum_U T^\text{GEN}_U$ and, in Eq.~\eqref{eq:group_trick}, we have used the change of variable $U\mapsto V^{-1} U $, which does not affect the sum because the set $\{U\}_U$ is a group. 

Note also that, if $C^{IO}$ is the Choi operator of a quantum channel, the operator defined by
\begin{equation}
    C'^{IO}\coloneqq \frac{1}{N} \sum_V (\id\otimes V^{{\otimes k}})^{IO} \,C^{IO}\, (\id\otimes V^{\dagger^{\otimes k}})^{IO},
\end{equation}
is a valid channel since it is positive semidefinite and $\tr_O(C'^{IO})=\tr_O(C^{IO})=\id^I$. It then follows that, for every quantum channel of the form $C=\bigotimes_{i=1}^k C_i^{I_iO_i}$, we have
\begin{align}
    \tr(W^{IO}C^{IO}) 
    &= 
    \frac{1}{N} \tr\left[\sum_V 
    (\id\otimes V^{\dagger^{\otimes k}}) 
    \,W^\text{GEN}\, 
    (\id\otimes V^{{\otimes k}}) \,C\right] \\
    &= 
    \frac{1}{N} \tr\left[W^\text{GEN} \sum_V 
    (\id\otimes V^{{\otimes k}}) 
    \,C\, 
    (\id\otimes V^{\dagger^{\otimes k}})\right] \\
    &=
    \tr(W^{IO}C'^{IO}) \\
    &=1, 
\end{align}
ensuring that $\{T_U\}_U$ is a valid general tester.

The next step is to show that the tester $\{T_U\}_U$ attains the same success probability for discriminating the ensemble $\mathcal{E}=\{p_U,\dketbra{U}{U}\}_U$ as the tester $\{T^\text{GEN}_U\}_U$. This claim follows from direct calculation, that is, 
\begin{align}
    \frac{1}{N} \sum_U \tr(T_U \dketbra{U}{U}^{\otimes k}) 
    &=
    \frac{1}{N^2} \sum_U\sum_V \tr\left[
	(\id\otimes V^{\dagger^{\otimes k}})\,
	T^\text{GEN}_{VU}\, 
	(\id\otimes V^{{\otimes k}})\,
	\dketbra{U}{U}^{\otimes k} \right]  \\
    &=
    \frac{1}{N^2} \sum_U\sum_V \tr\left[
	T^\text{GEN}_{VU}\, 
	(\id\otimes V^{{\otimes k}})\,
	\dketbra{U}{U}^{\otimes k}\,
	(\id\otimes V^{\dagger^{\otimes k}}) \right] \\
    &=
    \frac{1}{N^2} \sum_U\sum_V\tr\Big(
    T^\text{GEN}_{VU}\, 
    \dketbra{VU}{VU}^{\otimes k} \Big) \label{eq:group_trick2} \\
    &=
    \frac{1}{N^2} \sum_U\sum_V\tr\left(
	T^\text{GEN}_{V\left(V^{-1}U\right)}\,
	\dketbra{V\left(V^{-1}U\right)}{V\left(V^{-1}U\right)}^{\otimes k}\right) \\
    &=
    \frac{1}{N^2} \sum_U \sum_V \tr\Big(
	T^\text{GEN}_{U}\, 
	\dketbra{U}{U}^{\otimes k}\Big) \\
    &=
    \frac{1}{N} \sum_U \tr\Big(
	T^\text{GEN}_{U}\,
	\dketbra{U}{U}^{\otimes k}\Big). \label{eq::tgen=tu}
\end{align}

The final step is to verify that the process $W\coloneqq\sum_U T_U$ commutes with $\id\otimes U^{{\otimes k}}$ for every unitary operator $U\in\{U\}_U$ to ensure that the tester $\{T_U\}_U$ fulfills the hypothesis of Lemma~\ref{lemma:bruto}. Direct calculation shows that
\begin{align}
    (\id\otimes U^{{\otimes k}}) \,W\, (\id\otimes {U^\dagger}^{{\otimes k}})
    =&
    (\id\otimes U^{{\otimes k}}) 
    \frac{1}{N} \sum_V 
    (\id\otimes V^{\dagger^{\otimes k}})
    \,W^\text{GEN}\, 
    (\id\otimes V^{{\otimes k}})
    (\id\otimes {U^\dagger}^{{\otimes k}}) \\
    =&
    \frac{1}{N} \sum_V 
    \left(\id\otimes \left(UV^\dagger\right)^{\otimes k}\right)
    \,W^\text{GEN}\,
    \left( \id\otimes \left(VU^\dagger\right)^{{\otimes k}}\right) \\
    =&
    \frac{1}{N} \sum_V 
    \Big[\id\otimes (U(VU)^\dagger)^{\otimes k}\Big]
    \,W^\text{GEN}\, 
    \Big[\id\otimes ((VU)U^\dagger)^{{\otimes k}}\Big] \\
    =&
    \frac{1}{N} \sum_V 
    (\id\otimes {V^\dagger}^{\otimes k})
    \,W^\text{GEN}\,
    (\id\otimes V^{{\otimes k}}) \\
    =& W.
\end{align}
Hence, we have that
\begin{equation}
    W^{IO} \, \left(\id \otimes U^{\otimes k}\right)^{IO} = \left(\id \otimes U^{\otimes k}\right)^{IO}\, W^{IO},
\end{equation}
and by Lemma~\ref{lemma:bruto}, one can construct a parallel tester $\{T_U^\text{PAR}\}_U$ which respects
\begin{equation}
    \tr\Big(T^\text{PAR}_U \dketbra{U}{U}^{\otimes k} \Big) =
    \tr\Big( T_U \dketbra{U}{U}^{\otimes k}\Big)
    \quad \forall \ U\in\{U\}_U,
\end{equation}
and therefore, by applying Eq.~\eqref{eq::tgen=tu}, we have
\begin{equation}
    \frac{1}{N}\sum_U\tr\Big(T^\text{PAR}_U \dketbra{U}{U}^{\otimes k} \Big) =
    \frac{1}{N}\sum_U\tr\Big(T^\text{GEN}_U \dketbra{U}{U}^{\otimes k}\Big),
\end{equation}
concluding our proof.

\end{proof}


\section{Proof of Examples~\ref{ex::k2N4} and~\ref{ex::k2N8}}\label{app::proofseq_vs_par}
\setcounter{example}{0}

The examples in this section show the advantage of sequential strategies over parallel strategies in channel discrimination tasks that involve only unitary channels and using $k=2$ copies. In the examples of this section, general strategies cannot outperform sequential ones. We recall that in the following $\sigma_x$, $\sigma_y$, and $\sigma_z$ denote the Pauli operators and $H\coloneqq\ketbra{+}{0}+\ketbra{-}{1}$, where $\ket{\pm}\coloneqq \frac{1}{\sqrt{2}}(\ket{0}\pm \ket{1})$ denotes the Hadamard gate. In addition, if $A\in\mathcal{L}(\mathbb{C}^d)$ is a linear operator with spectral decomposition $A=\sum_i \alpha_i \ketbra{\psi_i}{\psi_i}$, its square root is defined as $\sqrt{A}:=\sum_i \sqrt{\alpha_i}\ketbra{\psi_i}{\psi_i}$.

We start by proving Example~\ref{ex::k2N4} from the main text. It concerns the discrimination of an ensemble composed of a uniform probability distribution and a set of unitaries that does not form a group.
\begin{example}
    The ensemble composed of a uniform probability distribution and $N=4$ qubit unitary channels given by $\{U_i\} = \{\id,\sqrt{\sigma_x},\sqrt{\sigma_y},\sqrt{\sigma_z}\}$, in a discrimination task that allows for $k=2$ copies, can be discriminated under a sequential strategy with success probability $P^\text{SEQ}=1$, while any parallel strategy yields $P^\text{PAR}<1$.
\end{example}

\begin{proof}
The sequential strategy that attains perfect discrimination is easily understood by realizing that when the $k=2$ copies of the unitaries $\{U_i\}$ are applied in sequence, one recovers  $U_i\,U_i = \sqrt{\sigma_i}\,\sqrt{\sigma_i} = \sigma_i$, where $\sigma_1=\id,\sigma_2=\sigma_x,\sigma_3=\sigma_y,\sigma_4=\sigma_z$. Therefore the task reduces to the discrimination of the four Pauli operators with $k=1$ copy, which can be perfectly realized with a two-qubit maximally entangled state and a Bell measurement.

In order to show that the probability $P^\text{PAR}$ of discriminating these unitary channels with $k=2$ copies in a parallel strategy is strictly less than one, we make use of the dual problem associated with the SDP that computes the maximal probability of successful discrimination, given in Eq.~\eqref{sdp::dual}. Hence, in order to obtain an upper bound for the maximal success probability $P^\text{PAR}$, it is enough to find a value $\lambda<1$ and the Choi state of a quantum channel $\overline{W}$, that is, $\overline{W}\geq0$ and $\tr_O(\overline{W})=\id^I$, that respect 
\begin{equation} \label{eq:upper_bound}
	\frac{1}{4}\dketbra{U_i}{U_i}^{\otimes 2}\leq \lambda \,\overline{W} \ \ \ \text{for } \ i\in\{1,2,3,4\}.
\end{equation}

Using the computer-assisted-proof method presented in Ref.~\cite{bavaresco20}, we obtain an operator $\overline{W}$ that satisfies all the quantum channel conditions exactly and, for $\lambda=\frac{9571}{1000}$, satisfies the inequality~\eqref{eq:upper_bound}. Hence,
\begin{equation}
P^\text{PAR}\leq\frac{9571}{10000}. 
\end{equation}
In our online repository~\cite{githubMTQ2} we present a Mathematica notebook that can be used to verify that $\overline{W}$ is a valid Choi state of a quantum channel.
\end{proof}

Another similar example with an interesting property is given by the unitary channel ensemble composed of a uniform probability distribution and $\{U_i\} = \{\id,{\sigma_x},{\sigma_y},\sqrt{\sigma_z}\}$.

To prove this example as well, let us start by constructing a perfect sequential strategy.
We start by noting that the four Bell states can be written as:
\begin{align*}
	&\ket{\phi^+}\coloneqq \frac{\ket{00}+\ket{11}}{\sqrt{2}}=\left(\id\otimes\id\right) \ket{\phi^+}
	, \quad\quad\quad\quad \ket{\phi^-}\coloneqq \frac{\ket{00}-\ket{11}}{\sqrt{2}}=\left(\id\otimes\sigma_z\right) \ket{\phi^+},\\
	&\ket{\psi^+}\coloneqq \frac{\ket{01}+\ket{10}}{\sqrt{2}}=\left(\id\otimes\sigma_x\right) \ket{\phi^+}, \quad\quad\quad\, \ket{\psi^-}\coloneqq \frac{\ket{01}-\ket{10}}{\sqrt{2}}=(-i)\left(\id\otimes\sigma_y\right) \ket{\phi^+}.
\end{align*}
Additionally, since we take $\sqrt{\sigma_z}=\ketbra{0}{0}+i\ketbra{1}{1}$, we can check that the state
\begin{equation}
\left( \id\otimes \sqrt{\sigma_z} \right) \ket{\phi^+} = \frac{\ket{00}+i\ket{11}}{\sqrt{2}}
\end{equation}
is orthogonal to $\ket{\psi^+}$ and $\ket{\psi^-}$. We will now exploit these identities to construct a sequential strategy that attains $P^\text{SEQ}=1$ with $k=2$ uses. 

The strategy goes as follows: Define the auxiliary space $\H^{\text{aux}_1}$ to be isomorphic to $\H^{I_1}$ and prepare the initial state $\rho\in\L(\H^{I_1}\otimes\H^{\text{aux}_1})$ as
\begin{equation}
	\rho^{I_1\,\text{aux}_1}\coloneqq\ketbra{\phi^+}{\phi^+}^{I_1\text{aux}_1}.
\end{equation}
The state $\rho^{I_1\text{aux}_1}$ will then be subjected to the first copy of a unitary channel $U_i$, leading to the output state $\rho^{O_1\,\text{aux}_1}$
\begin{equation}
\rho^{O_1\,\text{aux}_1} = \left(U_i \otimes \id^{\text{aux}_1} \right)\rho^{I_1\,\text{aux}_1}\left(U_i \otimes \id^{\text{aux}_1} \right)^\dagger
=
\left(U_i \otimes \id^{\text{aux}_1} \right)\ketbra{\phi^+}{\phi^+}^{I_1\,\text{aux}_1}\left(U_i \otimes \id^{\text{aux}_1}\right)^\dagger.
\end{equation}	
Then, we apply the following transformation on the state $\rho^{O_1\,\text{aux}_1}$. We perform a projective measurement with POVM elements given by
\begin{align}
	M_{\psi^+}&\coloneqq \ketbra{\psi^+}{\psi^+}  \\
	M_{\psi^-}&\coloneqq \ketbra{\psi^-}{\psi^-}  \\
	M_\phi&\coloneqq \ketbra{\phi^+}{\phi^+} +\ketbra{\phi^-}{\phi^-}  
\end{align}
and, after the measurement, re-prepare the quantum system in the state 
$\rho^{I_2\,\text{aux}_2}=\map{M_i}(\rho)=\sqrt{M_i}\rho\sqrt{M_i}^\dagger$ with probability $\tr\big(\map{M_i}(\rho)\big)=\tr(\rho M_i)$. This transformation can be described by a L\"uders instrument. It can be checked that, if $U_i=\sigma_x$, one obtains the outcome associated with $M_{\psi^+}$ with probability one. Similarly, if $U_i=\sigma_y$, one obtains the outcome associated with $M_{\psi^-}$ with probability one. Hence, in these two cases, we have perfect channel discrimination. Now, if we obtain the outcome associated with $M_\phi$, the unitary $U_i$ can be either $\id$ or $\sqrt{\sigma_z}$.
	
After performing the projective measurement with elements $\{M_{\psi^+},M_{\psi^-},M_\phi\}$ and a L\"uders instrument, the state $\rho^{I_2\,\text{aux}_2}$ is subjected to a second copy of $U_i$. Direct calculation shows that if $U_i=\id$, then after the use of the second copy of unitary $U_i$, the state $\rho^{O_2\,\text{aux}_2}$ of the system is
\begin{equation}
\left(\id\otimes\id\right)^2 \ket{\phi^+}=\ket{\phi^+}.
\end{equation}

If $U_i=\sqrt{\sigma_z}$ after the second use of the unitary $U_i$ the state $\rho^{O_2\,\text{aux}_2}$ of the system is
\begin{equation}
\left(\id\otimes\sqrt{\sigma_z}\right)^2 \ket{\phi^+}= \ket{\phi^-}.
\end{equation}

Since $\ket{\phi^+}$ and $\ket{\phi^-}$ are orthogonal, they can be discriminated with probability one. Hence, the set of unitary operators $\{U_i\}_{i=1}^4$ can be perfectly discriminated in a sequential strategy with $k=2$ copies.

Using the tester formalism, this sequential strategy would be presented in terms of a sequential tester $T^\text{SEQ}=\{T^\text{SEQ}_i\}$, which can be implemented by an input quantum state $\rho$, a quantum encoder channel $\map{E}$, and a quantum measurement $\{N_i\}_i$. For completeness, we now present an explicit sequential tester that attains $P^\text{SEQ}=1$.
As in the strategy described earlier, we set the initial state as $\rho^{I_1\,\text{aux}_1}\coloneqq\ketbra{\phi^+}{\phi^+}^{I_1\,\text{aux}_1}$. Now, instead of using an instrument, we define a quantum encoder channel
$\map{E}: \L(\H^{O_1}\otimes\H^{\text{aux}_1})\to\L(\H^{I_2}\otimes\H^{\text{aux}_2}\otimes\H^{{\text{aux}'_2}})$ as
\begin{equation}
	\map{E}(\rho) = \left(\sqrt{M_\phi}\,\rho\,\sqrt{M_\phi}^\dagger \right)\otimes M_\phi^{{\text{aux}'_2}} \, +\,
	\left(\sqrt{M_{\psi^+}}\,\rho\,\sqrt{M_{\psi^+}}^\dagger \right)\otimes M_{\psi^+}^{{\text{aux}'_2}} \, +\, 
	\left(\sqrt{M_{\psi^-}}\,\rho\,\sqrt{M_{\psi^-}}^\dagger \right) \otimes M_{\psi^-}^{{\text{aux}'_2}},
\end{equation}
so that $\rho^{I_2\,\text{aux}_2\,{\text{aux}'_2}}=\map{E}(\rho^{O_1\,\text{aux}_1})$. We finish our sequential tester construction by presenting quantum measurement given by operators $N_i\in\L(\H^{O_2}\otimes\H^\text{aux}\otimes\H^{{\text{aux}'_2}})$,
\begin{align}
	N_1 &\coloneqq \ketbra{\phi^+}{\phi^+} \otimes M_\phi   \\
	N_2 &\coloneqq \ketbra{\phi^-}{\phi^-} \otimes M_\phi   \\
	N_3 &\coloneqq \id \otimes M_{\psi^+}   \\
	N_4 &\coloneqq \id \otimes M_{\psi^-}. 
\end{align}
In this way, if $E$ is the Choi operator of the channel $\map{E}$, the sequential tester with elements $T^\text{SEQ}_i\coloneqq\rho*E*N_i^T$ respects $\sum_i\tr(T^\text{SEQ}_i \dketbra{U_i}{U_i}^{\otimes 2})=1$.

In order to show that the probability $P^\text{PAR}$ of discriminating these unitary channels with $k=2$ copies in a parallel strategy is strictly less than one, we apply the method of computer-assisted proof again to obtain the upper bound of
\begin{equation}
P^\text{PAR}\leq\frac{9741}{10000}. 
\end{equation}

An interesting property of this example is that, with the above described sequential strategy, for the cases in which the unknown channel is either $U_2=\sigma_x$ or  $U_3=\sigma_y$, a conclusive answer is achieved after only one use of the unknown channel. From the uniform probability of the ensemble, we know that this scenario would occur with probability $\frac{1}{2}$. Only when the unknown channel is  either $U_1=\id$ or  $U_4=\sqrt{\sigma_y}$ is it necessary to use the second copy of the unknown channel to arrive at a conclusive answer. This scenario would also occur with probability $\frac{1}{2}$. If one considers this discrimination task as being performed repeatedly, always drawing the unknown channel with uniform probability from the ensemble $\{U_i\} = \{\id,{\sigma_x},{\sigma_y},\sqrt{\sigma_z}\}$, then one can see that, on average over the multiples runs of the task, perfect discrimination will be achieved using only 1.5 copies of the unknown channel under this sequential strategy. 
\\

We now prove Example~\ref{ex::k2N8} from the main text. It concerns the discrimination of an ensemble composed of a non-uniform probability distribution and a set of unitaries that forms a group.

\begin{example}
    Let $\{U_i\}=\{\id,\,\sigma_x,\,\sigma_y,\,\sigma_z,\,H,\,\sigma_xH,\,\sigma_yH,\,\sigma_zH\}$ be a tuple of $N=8$ unitary channels that forms a group up to a global phase, and let $\{p_i\}$ be a probability distribution in which each element $p_i$ is proportional to the $i$-th digit of the number $\pi\approx3.1415926$, that is, $\{p_i\}=\{\frac{3}{31},\frac{1}{31},\frac{4}{31},\frac{1}{31},\frac{5}{31},\frac{9}{31},\frac{2}{31}.\frac{6}{31}\}$. For the ensemble $\{p_i,U_i\}$, in a discrimination task that allows for $k=2$ copies, sequential strategies outperform parallel strategies, i.e., $P^\text{PAR}<P^\text{SEQ}$.
\end{example}

\begin{proof}
The first step of the proof is to ensure that the tuple $\{\id,\,\sigma_x,\,\sigma_y,\,\sigma_z,\,H,\,\sigma_xH,\,\sigma_yH,\,\sigma_zH\}$ forms a group up to a global phase. This is done by direct inspection. The second step of the proof is to ensure that there is a sequential strategy which outperforms any parallel one. We accomplish this step with the aid of the computer-assisted-proof methods presented in Ref.~\cite{bavaresco20}. These methods allow us to compute rigorous and explicit upper and lower bounds for the maximal probability of success under parallel and sequential strategies. We obtain
\begin{equation}
  \frac{8196}{10000} < P^\text{PAR} < \frac{8197}{10000}  < P^\text{SEQ} < \frac{8198}{10000}, 
\end{equation}
ensuring that $P^\text{PAR}<P^\text{SEQ}$.

The code used in the computer-assisted proof of the this example is publicly available at our online repository~\cite{githubMTQ2}, along with a Mathematica notebook, which shows that this set of unitaries forms a group.
\end{proof}


\section{Proof of Example~\ref{ex::k3N4}
}\label{app::proofgen_vs_seq_vs_par}
\setcounter{example}{2}

The Example in this Section shows the advantage of general strategies over sequential strategies and of sequential strategies over parallel strategies in channel discrimination tasks that only involve unitary channels and using $k=3$ copies.

We start by proving Example~\ref{ex::k3N4} from the main text. It concerns the discrimination of an ensemble composed of a uniform probability distribution and a set of unitaries that does not form a group. We recall that, for the following, we define $H_y\coloneqq\ketbra{+_y}{0}+\ketbra{-_y}{1}$, where $\ket{\pm_y}\coloneqq \frac{1}{\sqrt{2}}(\ket{0}\pm i\ket{1})$, and $H_P\coloneqq\ketbra{+_P}{0}+\ketbra{-_P}{1}$, where $\ket{+_P}\coloneqq \frac{1}{5}(3\ket{0} + 4\ket{1})$ and $\ket{-_P}\coloneqq \frac{1}{5}(4\ket{0} - 3\ket{1})$.

\begin{example}
    For the ensemble composed of a uniform probability distribution and $N=4$ unitary channels given by $\{U_i\} = \{\sqrt{\sigma_x},\sqrt{\sigma_z},\sqrt{H_y},\sqrt{H_P}\}$, in a discrimination task that allows for $k=3$ copies, general strategies outperform sequential strategies, and sequential strategies outperform parallel strategies. Therefore, the maximal probabilities of success satisfy the strict hierarchy $P^\text{PAR}<P^\text{SEQ}<P^\text{GEN}$.
\end{example}

\begin{proof}
The proof follows from the direct application of the computer-assisted methods presented in Ref.~\cite{bavaresco20}. These methods allow us to find explicit and exact parallel/sequential/general testers that attain a given success probability, ensuring, then, a lower bound for the maximal success probability for its class. In addition, we can obtain an exact parallel/sequential/general upper bound, given the SDP dual formulation. The code used to obtain the computer-assisted proof of the present example is publicly available in our online repository~\cite{githubMTQ2}.

The computed bounds for the maximal probability of successful discrimination are
\begin{equation}
\begin{alignedat}{3}
  \frac{9570}{10000} < P^\text{PAR} < \frac{9571}{10000} & && && \\
  & < \frac{9876}{10000} < P^\text{SEQ} < \frac{9877}{10000} && && \\
  & && && < \frac{9881}{10000} < P^\text{GEN} < \frac{9882}{10000},
\end{alignedat}
\end{equation}
showing the advantage of strategies that apply indefinite causal order over ordered ones and proving a strict hierarchy between strategies for the discrimination of a set of unitary channels.
\end{proof}


\section{Proof of Theorem~\ref{thm::switchlike}}\label{app::proofswitchlike}
\setcounter{theorem}{3}

In this section, we prove Thm.~\ref{thm::switchlike} from the main text, which concerns the inability of switch-like strategies to outperform sequential strategies on channel discrimination tasks that involve only unitary channels. We also repeat here Fig.~\ref{fig:SL} from the main text for convenience of the reader.

\setcounter{figure}{1}

\begin{figure*}
\begin{center}
	\includegraphics[width=\textwidth]{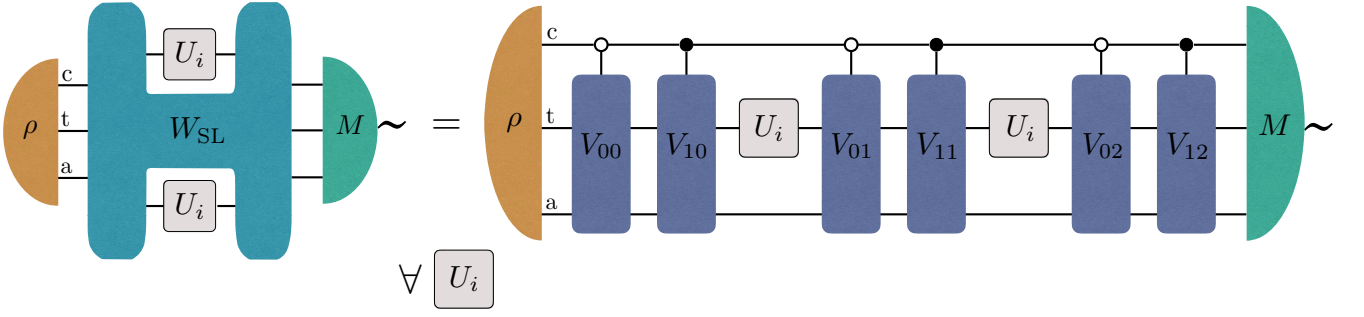}
	\caption{Illustration of a $2$-copy sequential strategy that attains the same probability of successful discrimination as any $2$-copy switch-like strategy, for all sets of unitary channels $\{U_i\}_{i=1}^N$. Line ``c'' represents a control system, ``t'' represents a target system, and ``a'' represents an auxiliary system. Both strategies can be straightforwardly extended to $k$ copies.}
\end{center}
\end{figure*}

\begin{theorem}
The action of the switch-like superchannel on $k$ copies of a unitary channel can be equivalently described by a sequential circuit that acts on $k$ copies of the same unitary channel.
    
Consequently, in a discrimination task involving the ensemble $\mathcal{E}=\{p_i,U_i\}_i$ composed of $N$ unitary channels and some probability distribution and that allows for $k$ copies, for every switch-like tester $\{T^\text{SL}_i\}_i$, there exists a sequential tester $\{T^\text{SEQ}_i\}_i$ that attains the same probability of success, according to
    \begin{equation}
	    \sum_{i=1}^N p_i \tr\Big(T^\text{SL}_i \dketbra{U_i}{U_i}^{\otimes k}\Big) 
    	=
	    \sum_{i=1}^N p_i \tr\Big(T^\text{SEQ}_i \dketbra{U_i}{U_i}^{\otimes k}\Big).
	\end{equation}
\end{theorem}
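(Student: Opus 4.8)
The plan is to exploit the defining feature of a channel-discrimination task: all $k$ slots are fed with copies of the \emph{same} unitary $U$. I would therefore begin by substituting $U_1=\cdots=U_k=U$ into the switch-like superchannel of Eq.~\eqref{eq::slsc} and its $k$-copy generalization. The crucial observation is that each branch of the control superposition applies the copies in a different order $U_{\pi(1)},\ldots,U_{\pi(k)}$ for some permutation $\pi$; but once all the inputs are identical, every such ordered product reduces to the same sequence $U,U,\ldots,U$ applied at the fixed temporal positions (first use, second use, and so on). Hence the only surviving branch-dependence sits in the interleaving unitaries $V_{mn}$, and not in the placement of the copies of $U$.

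Concretely, I would collect the interleaving unitaries into controlled unitaries $\tilde V_n \coloneqq \sum_m \ketbra{m}{m}^c \otimes V_{mn}$ acting jointly on the control, target, and auxiliary systems. Since each $\tilde V_n$ and each $(U\otimes\id)$ is block-diagonal in the control basis, their product is block-diagonal and reproduces branch $m$ in its $m$-th block; on identical inputs the superchannel therefore factorizes as
\begin{equation}
\Wcal_\text{SL}(U,\ldots,U) = \tilde V_k\,(U\otimes\id)\,\tilde V_{k-1}\cdots \tilde V_1\,(U\otimes\id)\,\tilde V_0,
\end{equation}
which is manifestly a sequential circuit: prepare, apply the first copy of $U$ to the target, apply the fixed controlled unitary $\tilde V_1$, apply the second copy of $U$, and so on, with the control and auxiliary systems carried along as memory between uses. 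This establishes the first assertion of the theorem, that the switch-like action on $k$ copies of a unitary equals the action of a sequential circuit.

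For the tester statement I would then define the sequential tester $\{T^\text{SEQ}_i\}_i$ from the \emph{same} input state $\rho$ and POVM $\{M_i\}_i$ used to build $\{T^\text{SL}_i\}_i$, but now composed with the sequentialized circuit above rather than with $W^\text{SL}$. Because the Choi operator of this sequential circuit coincides with $W^\text{SL}$ once both are contracted against $\dketbra{U}{U}^{\otimes k}$ for \emph{every} unitary $U$, the output state on the future space, and hence the probability of each outcome $i$, is unchanged; summing against $p_i$ yields the claimed equality of success probabilities. For $k=2$ this construction is precisely the circuit drawn in Fig.~\ref{fig:SL}, and for general $k$ the same recipe applies with the control register ranging over the $k!$ permutations.

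The main obstacle I anticipate is bookkeeping rather than conceptual: one must verify that the slot relabeling implicit in ``all orders collapse to one'' is consistent across all $k!$ branches (assigning copy $l$ uniformly to the $l$-th temporal use), and that the resulting object is a genuine $k$-slot comb, i.e.\ that the controlled $\tilde V_n$ furnish valid CPTP maps between uses and that the comb normalization constraints are satisfied. Care is likewise needed with the auxiliary- and control-system dimensions when promoting the two-copy construction to arbitrary $k$.
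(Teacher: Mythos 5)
Your proposal is correct and follows essentially the same route as the paper's proof: the controlled unitaries $\tilde V_n = \sum_m \ketbra{m}{m}^c \otimes V_{mn}$ you define are exactly the paper's $V^\text{ctrl}_n$, the block-diagonal collapse of $\Wcal_\text{SL}(U,\ldots,U)$ into a sequential circuit interleaving $\tilde V_n$ with $(\id^c\otimes U\otimes\id^a)$ is the paper's central identity, and the sequential tester built from the same $\rho$ and $\{M_i\}$ composed with the resulting ordered process is precisely the paper's construction. The bookkeeping worries you flag are handled just as you anticipate: each $\tilde V_n$ is unitary because the control basis is orthonormal and each $V_{mn}$ is unitary, so the circuit is a genuine sequential quantum circuit and hence an ordered process.
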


In order to provide a better intuition on this result,
before presenting the formal definition of the switch-like process with $k$ slots and proving Theorem~\ref{thm::switchlike} in full generality, we present a proof for the $k=2$ case that is illustrated in Fig.~\ref{fig:SL} in the main text. 
	
For the case $k=2$, the switch-like superchannel transforms a pair of unitary channels $\{U_1,U_2\}$ into one unitary channel, according to
\begin{align}
\begin{split}
    \Wcal_\text{SL}(U_1,U_2)\coloneqq &\ketbra{0}{0}^c \otimes V_{02}\,({U_2}\otimes\id^a)\,V_{01}\,({U_1}\otimes\id^a)\,V_{00} 
    + \ketbra{1}{1}^c\otimes V_{12}\,({U_1}\otimes\id^a)\,V_{11}\,({U_2}\otimes\id^a)\,V_{10},
\end{split}
\end{align}
where $\id^a$ is the identity operator acting on the auxiliary system and $V_{\pi i}$ are fixed unitary operators. Note that, if $U_1=U_2=U$, we have
\begin{align}
\begin{split}
    \Wcal_\text{SL}(U,U)= &\ketbra{0}{0}^c \otimes V_{02}\,({U}\otimes\id^a)\,V_{01}\,({U}\otimes\id^a)\,V_{00} 
    + \ketbra{1}{1}^c\otimes V_{12}\,({U}\otimes\id^a)\,V_{11}\,({U}\otimes\id^a)\,V_{10}.
\end{split}
\end{align}
We now define a controlled version of the unitary operators $V_{0i}$ as
\begin{equation}
    V^\text{ctrl}_{0 i}\coloneqq\ketbra{0}{0}^c\otimes V_{0i} + \ketbra{1}{1}^c \otimes \id^t \otimes \id^a.
\end{equation}
and a controlled version of $V_{1 i}$ as
\begin{equation}
    V^\text{ctrl}_{1 i}\coloneqq\ketbra{0}{0}^c\otimes \id^t \otimes \id^a + \ketbra{1}{1}^c \otimes V_{1i}.
\end{equation}

We first note that due to the orthogonality of $\ket{0}$ and $\ket{1}$, we have  $V^\text{ctrl}_{1i}V^\text{ctrl}_{0i}= \ketbra{0}{0}^c\otimes V_{0i} + \ketbra{1}{1}^c \otimes V_{1i}$. Hence, a direct calculation shows that 
\begin{align}
    V^\text{ctrl}_{12}V^\text{ctrl}_{02}\,(\id^c\otimes{U}\otimes\id^a)\ \cdot\ V^\text{ctrl}_{11}V^\text{ctrl}_{01}\,(\id^c\otimes{U}\otimes\id^a) \ \cdot\ V^\text{ctrl}_{10}V^\text{ctrl}_{00} =&
    \left(\ketbra{0}{0}^c\otimes V_{02} + \ketbra{1}{1}^c \otimes V_{12}\right)(\id^c\otimes{U}\otimes\id^a)\ \cdot \nonumber\\
    &\left(\ketbra{0}{0}^c\otimes V_{01} + \ketbra{1}{1}^c \otimes V_{11}\right)(\id^c\otimes{U}\otimes\id^a)\ \cdot \nonumber \\
    &\left(\ketbra{0}{0}^c\otimes V_{00} + \ketbra{1}{1}^c \otimes V_{10}\right)(\id^c\otimes{U}\otimes\id^a)\\
    =&\Wcal_\text{SL}(U,U).
\end{align}
This shows that, when $U_1=U_2=U$, a $2$-slot sequential circuit which performs the operations $V^\text{ctrl}_{12}V^\text{ctrl}_{02}$, $V^\text{ctrl}_{11}V^\text{ctrl}_{01}$, and $V^\text{ctrl}_{10}V^\text{ctrl}_{00}$ can perfectly simulate the two-slot switch-like superchannel. See Fig.~\ref{fig:SL} in the main text for an illustration.

Now, we extend this result to an arbitrary finite number of copies $k$.

\begin{definition}[Switch-like superchannel]
\label{def:SL}
Let $\{\pi\}_\pi$, $\pi\in\{0,\ldots,k!-1\}$ be a set in which each integer $\pi$ represents a permutation of the set $\{1,\ldots,k\}$ and $\sigma_\pi:\{1,\ldots,k\}\to\{1,\ldots,k\}$ be the permutation function such that, after permutation $\pi$, the element $i\in\{1,\ldots,k\}$ is mapped to $\sigma_\pi(i)$. 
The $k$-slot switch-like superchannel acts on a set of $k$ unitary operators $\{U_i\}_{i=1}^k$, $U_i:\H^{I_i}\to\H^{O_i}$ according to
\begin{equation}
    \mathcal{W}_\text{SL}(U_1,\ldots,U_k)\coloneqq \sum_{\pi=0}^{k!-1} \ketbra{\pi}{\pi}^c \otimes
    \left[V_{\pi k} \left(U_{\sigma_\pi(k)}\otimes\id^{a}\right)	V_{\pi(k-1)}
    \left(U_{\sigma_\pi(k-1)}\otimes\id^{a}\right)
    V_{\pi(k-2)}
    \ldots
    \left(U_{\sigma_\pi(1)}\otimes\id^{a}\right)	V_{\pi0}
    \right],
\end{equation}
where $\{V_{\pi n}\}_{\pi n}$ 
is a set of unitary operators defined as 
\begin{align}
    & V_{\pi0}:\H^{P_t}\otimes\H^{a}\to\H^{I_{\sigma_\pi(1)}}\otimes \H^{a} \\
    & V_{\pi n}:	\H^{I_{\sigma_\pi(i)}} \otimes\H^{a}\to\H^{O_{\sigma_\pi(i+1)}} \otimes \H^{a} \quad \quad \text{for}\, n\in\{1,\ldots,k-1\}\\
    & V_{\pi k}:\H^{I_{\sigma_\pi(k)}}\otimes\H^{a}\to\H^{F_t}\otimes \H^{a}
\end{align}	
\end{definition}

Here, we have defined the switch-like superchannel only by its action on unitary channels, without explicitly stating how the switch-like superchannel acts on general quantum operations%
\interfootnotelinepenalty=10000
\footnote{Interestingly, it can be proved that for the ($k=2$)-slot case, the action of the standard quantum switch superchannel (i.e. $V_{\pi n}=\id$) on unitary channels uniquely defines its action on general operations.~\cite{dong21} The possibility of extending this result for general switch-like superchannels is still open, but the existence of general switch-like superchannels is ensured by the construction presented via Eq.~\eqref{eq:SLexists}.}
or its process $W^\text{SL}\in\L\left(\H^P\otimes\H^I\otimes\H^O\otimes\H^F\right)$. In order to prove Theorem~\ref{thm::switchlike} and for the main purpose of this paper, knowing the action of switch-like superchannels only on unitary channels will be enough, but for the sake of concreteness, we also present an explicit process which implements the switch-like superchannel. For that, we define the process $W^\text{SL}:=\dketbra{U_\text{SL}}{U_\text{SL}}$ where 
\begin{equation} \label{eq:SLexists}
    U_\text{SL}:=\bigoplus_\pi V_{\pi k} V_{\pi k-1} \ldots V_{\pi 1} V_{\pi 0}.
\end{equation}
Following Lemma~1 in Ref.~\cite{yokojima20} (see also Theorem~2 of Ref.~\cite{araujo16}), one can verify that the process $W^\text{SL}$ acts on unitary operators according to the switch-like superchannel, as presented in Definition \ref{def:SL}.

\begin{proof}
We start our proof by defining the generalized controlled operation
\begin{equation}
    V^\text{ctrl}_n\coloneqq\sum_{\pi=0}^{k!-1} \ketbra{\pi}{\pi}^c \otimes V_{\pi n} \ \ \ \forall \ n\in\{0,\ldots,k\},
\end{equation}
which is a valid unitary operator since $V^\text{ctrl}_n\left(V^\text{ctrl}_n\right)^\dagger=\id$. Now, note that, due to orthogonality of the vectors $\ket{\pi}$, we have
\begin{equation}
    V^\text{ctrl}_k(\id^c\otimes U\otimes\id^a)V^\text{ctrl}_{(k-1)}(\id^c\otimes U\otimes\id^a)V^\text{ctrl}_{(k-2)}\ldots(\id^c\otimes U\otimes\id^a) V^\text{ctrl}_0 = \mathcal{W}_{SL}(U,\ldots,U).
\end{equation}
Hence, similarly to the $k=2$ case, a simple concatenation of the operators $V^\text{ctrl}_i$ provides a $k$-slot sequential quantum circuit that perfectly simulates the switch-like $k$-slot superchannel when all input unitary channels are equal. 

Since every sequential quantum circuit can be written as an ordered process $W^\text{SEQ}\in\L\left(\H^P\otimes\H^I\otimes\H^O\otimes\H^F\right)$~\cite{chiribella07}, when $k$ identical unitary operators $U$ are plugged into the process $W^\text{SEQ}$, the output operation is described by
\begin{equation}
    W^\text{SEQ}*\dketbra{U}{U}^{\otimes k} = W^\text{SL}*\dketbra{U}{U}^{\otimes k},
\end{equation}
where $*$ is the link product and $W^\text{SL}$ is a process associated with the switch-like superchannel. Hence, if
\begin{equation}
    T^\text{SL}_i\coloneqq \tr_{PF}[(\rho\otimes\id)W^\text{SL}(\id\otimes M_i)]
\end{equation}
is the tester associated to the switch-like strategy, then one can construct a sequential tester
\begin{equation}
    T^\text{SEQ}_i= \tr_{PF}[(\rho\otimes\id)W^\text{SEQ}(\id\otimes M_i)]
\end{equation} 
such that, for any unitary operator $U$, one has
\begin{equation}
 \tr\left(T^\text{SL}_i\, \dketbra{U}{U}^{\otimes k}\right)  = \tr\left(T^\text{SEQ}_i\, \dketbra{U}{U}^{\otimes k}\right),
\end{equation}
ensuring that there is always a sequential tester that performs as well as any switch-like one.

\end{proof}


\section{Upper bound}\label{app::upperbound}

We start this section by stating a lemma from Ref.~\cite{hashimoto10} which will be very useful for proving Thm.~\ref{thm::upper_bound}. We would also like to mention that step 2 of Ref.~\cite{korff04} and Thm.~3 of Ref.~\cite{chiribella06} are essentially equivalent to the lemma that we now state.
\begin{lemma}[Ref.~\cite{hashimoto10,korff04,chiribella06}]
\label{lemma:hashimoto}
Let $E\in\mathcal{L}(\mathbb{C}^d)$ be a positive semidefinite operator, $U_g\in\mathcal{L}(\mathbb{C}^d)$ be unitary operators, and $G=\{U_g\}_g$ be a (compact Lie) group of unitary operators up to a global phase.  It holds true that
\begin{equation}
    E\leq \gamma_G \int_\text{Haar} U_g\,E\, U_g^\dagger \text{d}g,
\end{equation}
with
\begin{equation}
    \gamma_G:=\sum_{\mu\in \text{irrep}\{G\}} d_\mu \min(m_\mu,d_\mu),
\end{equation}
where $\text{irrep}\{G\}$ is the set of all inequivalent irreducible representations (irreps) of $G$ in $\mathcal{L}(\mathbb{C}^d)$, $d_\mu$ is the dimension of the linear space corresponding to the irrep $\mu$ and $m_\mu$ is its associated multiplicity. 
\end{lemma}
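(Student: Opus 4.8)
The plan is to reduce the statement to representation theory via the Haar twirl $\mathcal{T}(E)\coloneqq\int_\text{Haar} U_g\,E\,U_g^\dagger\,\text{d}g$, which is exactly the operator appearing on the right-hand side. Since $G$ acts on $\mathbb{C}^d$ as a (projective, because the group elements are fixed only up to global phase) representation, I would first decompose the carrier space into its isotypic components, $\mathbb{C}^d\cong\bigoplus_\mu \mathbb{C}^{d_\mu}\otimes\mathbb{C}^{m_\mu}$, under which $U_g\cong\bigoplus_\mu U_g^\mu\otimes\id_{m_\mu}$ with $U_g^\mu$ an irrep of dimension $d_\mu$ occurring with multiplicity $m_\mu$. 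Both sides of the claimed inequality $E\leq\gamma_G\,\mathcal{T}(E)$ are linear in $E$, and a sum of positive semidefinite operators is positive semidefinite; hence it suffices to establish the bound for rank-one $E=\ketbra{\psi}{\psi}$ and then sum over a spectral decomposition of a general $E\geq0$.

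For the rank-one case I would write $\ket{\psi}=\sum_\mu\ket{\psi_\mu}$ with $\ket{\psi_\mu}$ the projection of $\ket{\psi}$ onto the $\mu$-isotypic subspace, and compute the twirl blockwise. Schur's lemma (which survives the projective setting, since the cocycle phases cancel under conjugation) annihilates all cross-terms $\mu\neq\nu$ and, within each block, yields $\mathcal{T}(\ketbra{\psi_\mu}{\psi_\mu})=\tfrac{1}{d_\mu}\,\id_{d_\mu}\otimes\tr_{d_\mu}\!\big(\ketbra{\psi_\mu}{\psi_\mu}\big)$, where $\tr_{d_\mu}$ is the partial trace over the representation space. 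The technical heart is the single-block inequality: Schmidt-decomposing $\ket{\psi_\mu}$ across the representation/multiplicity cut and applying Cauchy-Schwarz to the resulting bilinear form shows that, for every $\ket{\phi}$, $|\braket{\phi}{\psi_\mu}|^2\leq r_\mu\,\bra{\phi}\big(\id_{d_\mu}\otimes\tr_{d_\mu}\ketbra{\psi_\mu}{\psi_\mu}\big)\ket{\phi}$, where $r_\mu$ is the Schmidt rank, which never exceeds $\min(d_\mu,m_\mu)$. This gives $\ketbra{\psi_\mu}{\psi_\mu}\leq d_\mu\min(d_\mu,m_\mu)\,\mathcal{T}(\ketbra{\psi_\mu}{\psi_\mu})=\gamma_\mu\,\mathcal{T}(\ketbra{\psi_\mu}{\psi_\mu})$, with $\gamma_\mu\coloneqq d_\mu\min(d_\mu,m_\mu)$ so that $\sum_\mu\gamma_\mu=\gamma_G$.

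The main obstacle is that $\ketbra{\psi}{\psi}$ carries off-diagonal blocks $\ketbra{\psi_\mu}{\psi_\nu}$ that are destroyed by $\mathcal{T}$, so one cannot merely add the per-block bounds. I would circumvent this with a weighted Cauchy-Schwarz operator inequality: for any positive weights $c_\mu$, $\ketbra{\psi}{\psi}\leq\big(\sum_\mu c_\mu\big)\sum_\mu c_\mu^{-1}\ketbra{\psi_\mu}{\psi_\mu}$, proved by applying Cauchy-Schwarz to the expansion $\braket{\phi}{\psi}=\sum_\mu\braket{\phi}{\psi_\mu}$. Choosing $c_\mu=\gamma_\mu$ makes $\sum_\mu c_\mu=\gamma_G$, and the single-block bounds give $\sum_\mu\gamma_\mu^{-1}\ketbra{\psi_\mu}{\psi_\mu}\leq\sum_\mu\mathcal{T}(\ketbra{\psi_\mu}{\psi_\mu})=\mathcal{T}(\ketbra{\psi}{\psi})$, whence $\ketbra{\psi}{\psi}\leq\gamma_G\,\mathcal{T}(\ketbra{\psi}{\psi})$; summing over the spectral decomposition of $E$ then completes the argument. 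I expect the only delicate points to be the justification of Schur's lemma for a projective representation and the bookkeeping ensuring that the Schmidt rank in each block is genuinely capped by $\min(d_\mu,m_\mu)$.
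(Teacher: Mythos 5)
The paper itself contains no proof of this lemma: it is imported by citation from Refs.~\cite{hashimoto10,korff04,chiribella06}, so the relevant comparison is with the literature argument rather than with anything in the paper. Judged on its own, your proof is correct, and it is essentially a self-contained reconstruction of the standard argument behind the cited results (in particular the pure-state twirl bound of Hashimoto \emph{et al.}). All the steps check out. The rank-one reduction is valid because both sides are linear in $E$ and positive combinations preserve operator inequalities. In the projective setting, the isotypic decomposition $\mathbb{C}^d\cong\bigoplus_\mu\mathbb{C}^{d_\mu}\otimes\mathbb{C}^{m_\mu}$, the vanishing under the twirl of cross-terms between inequivalent irreps, and the blockwise formula $\mathcal{T}(X_\mu)=\frac{\id_{d_\mu}}{d_\mu}\otimes\tr_{d_\mu}(X_\mu)$ all survive, since every block carries the same cocycle, the phases cancel under conjugation, and Schur's lemma with the orthogonality relations applies verbatim (the choice of a measurable section $g\mapsto U_g$ is harmless because the twirl depends only on $\mathrm{Ad}_{U_g}$). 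Your single-block estimate is the crux and is sound: writing $\ket{\psi_\mu}=\sum_{j=1}^{r_\mu}s_j\ket{a_j}\ket{b_j}$ and applying Cauchy--Schwarz with unit weights gives $|\braket{\phi}{\psi_\mu}|^2\le r_\mu\sum_j s_j^2\,\bra{\phi}\big(\id\otimes\ketbra{b_j}{b_j}\big)\ket{\phi}=r_\mu\bra{\phi}\big(\id\otimes\tr_{d_\mu}\ketbra{\psi_\mu}{\psi_\mu}\big)\ket{\phi}$, and $r_\mu\le\min(d_\mu,m_\mu)$ because a Schmidt rank is capped by both local dimensions. The weighted Cauchy--Schwarz inequality $\ketbra{\psi}{\psi}\le\big(\sum_\mu c_\mu\big)\sum_\mu c_\mu^{-1}\ketbra{\psi_\mu}{\psi_\mu}$ correctly absorbs the off-diagonal blocks destroyed by $\mathcal{T}$, and the choice $c_\mu=\gamma_\mu$ is legitimate since every $\gamma_\mu\ge 1$; blocks with $\ket{\psi_\mu}=0$ should simply be dropped from the sum, which only improves the constant to $\sum_{\mu:\,\psi_\mu\neq0}\gamma_\mu\le\gamma_G$. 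The one bookkeeping point worth making explicit is that the sum defining $\gamma_G$ runs over the irreps actually appearing in $\mathbb{C}^d$ (all $m_\mu\ge1$), which is how the lemma's statement is to be read; with that convention your argument delivers exactly the claimed bound.
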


We are now in conditions to prove Thm.~\ref{thm::upper_bound}.

\begin{theorem}[Upper bound for general strategies] 
Let $\mathcal{E}=\{p_i,U_i\}_{i=1}^N$ be an ensemble composed of $N$ $d$-dimensional unitary channels and a uniform probability distribution. The maximal probability of successful discrimination of a general strategy with $k$ copies is upper bounded by
\begin{equation}
    P^\text{GEN}\leq \frac{1}{N}\,\gamma(d,k),
\end{equation}
where $\gamma(d,k)$ is given by
\begin{equation}
   \gamma(d,k) \coloneqq 
{k+d^2-1\choose k}= \frac{(k+d^2-1)!}{k!(d^2-1)!}.
\end{equation}
\end{theorem}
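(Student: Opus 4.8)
The plan is to bound $P^\text{GEN}$ through the tester characterization of Eq.~\eqref{eq:SDP} together with Lemma~\ref{lemma:hashimoto}, applied to the \emph{full} unitary group rather than to any group generated by the $U_i$ (this is what makes the bound hold for arbitrary ensembles). Write $C_U\coloneqq\dketbra{U}{U}$ for the one-copy Choi operator and introduce the unitary representation $\mathcal{R}(U)\coloneqq(\id\otimes U)^{\otimes k}$ of $U(d)$ (equivalently $SU(d)$ up to phases, as admitted by Lemma~\ref{lemma:hashimoto}, since global phases act trivially on Choi operators). Then $C_U^{\otimes k}=\mathcal{R}(U)\,C_\id^{\otimes k}\,\mathcal{R}(U)^\dagger$, so conjugation by $\mathcal{R}$ moves $C_\id^{\otimes k}$ along the orbit of Choi operators. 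First I would apply Lemma~\ref{lemma:hashimoto} to the positive operator $E=C_\id^{\otimes k}$ with this representation, obtaining $C_\id^{\otimes k}\leq\gamma_G\,\Omega$ with $\Omega\coloneqq\int_\text{Haar}C_U^{\otimes k}\,\mathrm{d}U$ the Haar twirl. Conjugating both sides by $\mathcal{R}(U_i)$ and using left-invariance of the Haar measure, which leaves $\Omega$ invariant, upgrades this to the operator inequality
\begin{equation}\label{eq:plan_key}
    C_{U_i}^{\otimes k}\leq\gamma_G\,\Omega\qquad\forall\,i,
\end{equation}
valid for every $U_i\in U(d)$ regardless of whether the $\{U_i\}$ form a group.

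Given Eq.~\eqref{eq:plan_key}, the bound is immediate. Any general tester satisfies $T^\text{GEN}_i\geq0$ and defines a general process $W^\text{GEN}=\sum_i T^\text{GEN}_i$; positivity and Eq.~\eqref{eq:plan_key} then give
\begin{equation}
    \frac{1}{N}\sum_{i}\tr\big(T^\text{GEN}_i\,C_{U_i}^{\otimes k}\big)\leq\frac{\gamma_G}{N}\sum_i\tr\big(T^\text{GEN}_i\,\Omega\big)=\frac{\gamma_G}{N}\,\tr\big(W^\text{GEN}\,\Omega\big).
\end{equation}
The key observation is that $\Omega$ is a Haar average of tensor products $C_U^{\otimes k}=C_U\otimes\cdots\otimes C_U$ of $k$ genuine channel Choi operators, so the general-process normalization of Eq.~\eqref{eq::wnorm} forces $\tr(W^\text{GEN}C_U^{\otimes k})=1$ for every $U$; integrating over the Haar measure gives $\tr(W^\text{GEN}\Omega)=1$ for every general process. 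Maximizing the left-hand side over testers therefore yields $P^\text{GEN}\leq\gamma_G/N$, and it only remains to identify $\gamma_G=\gamma(d,k)$.

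To evaluate $\gamma_G$ I would restrict to the symmetric subspace $\mathrm{Sym}^k(\mathbb{C}^{d^2})$: the operator $C_\id^{\otimes k}$ is the $k$-th tensor power of a single rank-one operator on the one-copy Choi space $\mathbb{C}^{d^2}=\mathbb{C}^d\otimes\mathbb{C}^d$, hence supported on $\mathrm{Sym}^k(\mathbb{C}^{d^2})$, and $\mathcal{R}$ preserves this subspace, acting on it as $\mathrm{Sym}^k(\id\otimes U)$; restricting is harmless and avoids the mixed-symmetry irreps that would otherwise inflate $\gamma_G$. On $\mathbb{C}^{d^2}=\mathbb{C}^d\otimes\mathbb{C}^d$ the representation $\id\otimes U$ consists of $d$ copies of the defining representation, so the Cauchy identity gives $\mathrm{Sym}^k(\mathbb{C}^d\otimes\mathbb{C}^d)\cong\bigoplus_{\lambda\vdash k}S^\lambda(\mathbb{C}^d)\otimes S^\lambda(\mathbb{C}^d)$ with $U$ acting only on the second factor. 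Thus each irrep $\lambda$ occurs with multiplicity $m_\lambda=\dim S^\lambda(\mathbb{C}^d)$ equal to its own dimension $d_\lambda=\dim S^\lambda(\mathbb{C}^d)$, so $\min(m_\lambda,d_\lambda)=d_\lambda$ and
\begin{equation}
    \gamma_G=\sum_{\lambda\vdash k}d_\lambda\min(m_\lambda,d_\lambda)=\sum_{\lambda\vdash k}\big(\dim S^\lambda(\mathbb{C}^d)\big)^2=\dim\mathrm{Sym}^k(\mathbb{C}^{d^2})={k+d^2-1\choose k}=\gamma(d,k).
\end{equation}

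The main obstacle is precisely this last, representation-theoretic step: one must pin down the correct conjugating representation, restrict to the symmetric subspace so that stray irreps do not enlarge the Hashimoto constant, and recognize the coincidence $m_\lambda=d_\lambda$ which, via the Cauchy identity, collapses $\gamma_G$ to the clean binomial $\gamma(d,k)$. The earlier steps---the operator inequality~\eqref{eq:plan_key} and the identity $\tr(W^\text{GEN}\Omega)=1$---are comparatively routine once the representation and the twirl $\Omega$ are fixed. Tightness of the resulting bound follows separately from its saturation by group $k$-designs, as noted in the main text.
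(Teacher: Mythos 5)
Your proof is correct and rests on the same engine as the paper's own proof: the Haar-twirl operator $\Omega=\int_\text{Haar}\dketbra{U}{U}^{\otimes k}\,\mathrm{d}U$, Lemma~\ref{lemma:hashimoto} applied to the representation $U\mapsto(\id\otimes U)^{\otimes k}$, and the identity $\sum_\lambda d_\lambda^2=\binom{k+d^2-1}{k}$. Two steps genuinely differ, and both variations are sound. First, you close the argument on the primal side: tester positivity together with the general-process normalization of Eq.~\eqref{eq::wnorm}, integrated over the Haar measure (each $C_U^{\otimes k}$ is a product of channel Choi operators), gives $\tr(W^\text{GEN}\Omega)=1$ and hence $P^\text{GEN}\le\gamma_G/N$ directly. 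The paper instead plugs $\overline{W}=\Omega$ into the dual SDP of Eq.~\eqref{sdp::dual}, which obliges it to verify explicitly that $\Omega$ is the Choi operator of a non-signaling channel, i.e., positivity, $\tr(\Omega)=d^k$, and the marginal conditions ${}_{O_j}\Omega={}_{I_jO_j}\Omega$ for every slot $j$; your route sidesteps that verification by, in effect, re-proving weak duality for this particular certificate. Second, on the constant: the paper applies the lemma on all of $(\mathbb{C}^{d^2})^{\otimes k}$, invokes the generic bound $\min(m_\mu,d_\mu)\le d_\mu$, and cites Schur for $\sum_\mu d_\mu^2=\binom{k+d^2-1}{k}$, whereas you restrict to $\mathrm{Sym}^k(\mathbb{C}^{d^2})$, where the Cauchy identity makes every multiplicity equal to the corresponding irrep dimension, so the Hashimoto constant is evaluated \emph{exactly} as $\dim\mathrm{Sym}^k(\mathbb{C}^{d^2})=\binom{k+d^2-1}{k}$. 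That restriction is legitimate: $\dketbra{\id}{\id}^{\otimes k}$ and its twirl are both supported on the symmetric subspace and the representation preserves it, so the operator inequality obtained there extends trivially to the full space. What your variant buys is a self-contained derivation of the binomial coefficient, with no slack hidden in the $\min$ and no external citation; what the paper's buys is an explicit feasible point of the stated dual SDP, which slots directly into its general computer-assisted-proof framework.
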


\begin{proof}
The dual formulation of the channel discrimination problem [see Eq.~\eqref{sdp::dual}] guarantees that if $\overline{W}$ is the Choi operator of a non-signaling channel and the constraints
\begin{equation} \label{eq:upper_via_dual}
    p_i \dketbra{U_i}{U_i}^{\otimes k} \leq \lambda \overline{W} \quad \forall i\in\{1,\ldots,N\}
\end{equation}
are respected, the coefficient $\lambda$ is an upper bound for the maximal probability of successfully discriminating the ensemble $\mathcal{E}=\{p_i,U_i\}_{i=1}^N$ with $k$ copies.
Our proof consists in explicitly presenting the Choi operator of a non-signaling channel $\overline{W}$ and a real number $\lambda$ that respects the constraints of Eq.~\eqref{eq:upper_via_dual} for any ensemble $\mathcal{E}=\{p_i,U_i\}_{i=1}^N$ with $p_i=\frac{1}{N}$.

Consider the following ansatz: 
\begin{align}
    \overline{W}&:= \int_\text{Haar} \dketbra{U}{U}^{\otimes k} \text{d}U \\
    \lambda&:= \frac{1}{N}\sum_{\mu\in\text{irrep}\{SU(d)^{\otimes k}\}} d_\mu^2.
\end{align}
where $\text{irrep}\{SU(d)^{\otimes k}\}$ is the set of all inequivalent irreducible representations (irreps) of $SU(d)^{\otimes k}$ and $d_\mu$ is the dimension of the linear space corresponding to the irrep $\mu$.

Our first step is to show that $\overline{W}$ is the Choi operator of a non-signaling channel. The operator $\overline{W}$ is positive semidefinite because it is a convex mixture of positive semidefinite operators. Additionally, for any $d$-dimensional unitary operator $U$, we have that $\tr(\dketbra{U}{U})=d$. Hence, from the normalization of the Haar measure, we have that $\tr(\overline{W})=d^k$.

The last step to certify that $\overline{W}$ is indeed a non-signaling channel is then to guarantee that if $j\in\{1,\ldots,k\}$ stands for a slot of our process, we have that 
\begin{equation}
    _{O_j}\overline{W}=_{I_jO_j}\overline{W} \quad  \forall j\in\{1,\ldots,k\}
\end{equation}
where $\H_{I_j}$ and $\H_{O_j}$ correspond to the input and output space associated to the slot $j$ respectively.
Since for any $\dketbra{U}{U}\in\mathcal{L}\big(\H_{I_j}\otimes\H_{O_j}\big)$ we have
\begin{align}
\tr_{O_j} \Big( \dketbra{U}{U}\Big) 
&= \tr_{O_j} \Big( \big(  \id \otimes U \big)\,\dketbra{\id}{\id}\, \big( \id\otimes U^\dagger\big) \Big)\\
&=\tr_{O_j} \Big( \big(\id \otimes U^\dagger U \big)\, \dketbra{\id}{\id} \Big) \\
&= \tr_{O_j} \Big( \dketbra{\id}{\id} \Big) \\
&= \id_{I_j},
\end{align}
it holds that
\begin{align}
_{O_j}\overline{W} &=
\int_{\text{Haar}} \,_{O_j} \Big( \dketbra{U}{U}^{\otimes k}\Big) \text{d}U\\
&=\int_{\text{Haar}} \,_{O_i} 
\Big(\dketbra{U}{U}_{I_1O_1}\otimes\dketbra{U}{U}_{I_2O_2}\otimes \ldots \dketbra{U}{U}_{I_kO_k}\Big) \text{d}U \\
&=_{I_iO_i}\overline{W}.
\end{align}

The last step of the proof is then to certify that for $p_i=\frac{1}{N}$, we indeed have 
\begin{equation}
    p_i \dketbra{U_i}{U_i}^{\otimes k} \leq \lambda \overline{W}.
\end{equation}
for any set of unitary operators $\{U\}_{i=1}^N$.
First, we observe that due to the left and right invariance of the Haar measure, for any unitary operator $V\in SU(d)$, the operator $\overline{W}$ can be written as
\begin{align}
    \overline{W}:=&
    \int_\text{Haar} \dketbra{U}{U}^{\otimes k} \text{d}U\\
    =& 
    \int_\text{Haar} 
    \Big[\big(\id\otimes U\big)\dketbra{\id}{\id}\big( \id\otimes U^\dagger\big) \Big]^{\otimes k} \text{d}U \\
    =& 
    \int_\text{Haar} 
    \Big[\big(\id\otimes(UVU^\dagger) U\big)\dketbra{\id}{\id}\big( \id\otimes U^\dagger (UVU^\dagger)\big) \Big]^{\otimes k} \text{d}U \\
    =& 
    \int_\text{Haar} 
    \Big[\big(\id\otimes U\big)\dketbra{V}{V}\big( \id\otimes U^\dagger\big) \Big]^{\otimes k} \text{d}U.
\end{align}
Additionally, the set $\big\{\id^{\otimes k}\otimes U^{\otimes k}\big\}_{U\in SU(d)}$ is a compact Lie group. Moreover, the dimensions of the linear spaces associated with the irreducible representation of $\big\{\id^{\otimes k}\otimes U^{\otimes k}\big\}_{U\in SU(d)}$  coincide with the dimension of the irreducible representations of $\big\{U^{\otimes k}\big\}_{U\in SU(d)}$. Since $\min(d_\mu,m_\mu)\leq d_\mu$, Lemma~\ref{lemma:hashimoto} ensures that
\begin{align}
\frac{1}{N} \dketbra{U_i}{U_i}^{\otimes k}
&\leq \frac{1}{N} \left(\sum_{\mu\in\text{irrep}\{SU(d)^{\otimes k}\}} d_\mu^2\right) \int_\text{Haar} 
\Big[\big(\id\otimes U\big)\dketbra{U_i}{U_i}\big( \id\otimes U^\dagger\big) \Big]^{\otimes k} \text{d}U \\
&=\frac{1}{N} \left(\sum_{\mu\in\text{irrep}\{SU(d)^{\otimes k}\}} d_\mu^2\right) \int_\text{Haar} 
\Big[\big(\id\otimes U\big)\dketbra{\id}{\id}\big( \id\otimes U^\dagger\big) \Big]^{\otimes k} \text{d}U \\
&= \lambda \overline{W}.
\end{align}
Hence, $\lambda:=\frac{1}{N}\left(\sum_{\mu\in\text{irrep}\{SU(d)^{\otimes k}\}} d_\mu^2\right) $ is indeed an upper bound for the maximum probability of discriminating any set of $N$ $d$-dimensional unitary channels with $k$ copies with general strategies.

 We finish the proof by recognizing that, as proven in Ref.~\cite{schur} [p. 57, Eq.~(57)], we have the following identity:
\begin{equation}
    \sum_{\mu\in\text{irrep}\{SU(d)^{\otimes k}\}} d_\mu^2 ={k+d^2-1\choose k}.
\end{equation}
\end{proof}

\newpage


\twocolumngrid


%

\end{document}